\documentclass[a4paper,oneside,11pt]{amsart}
\usepackage{graphicx}
\usepackage[title]{appendix}
\usepackage{stmaryrd}
\usepackage{amsthm}
\usepackage{dsfont}
\usepackage{hyperref}
\usepackage{mathrsfs}
\usepackage{stmaryrd}
\usepackage[lite,initials,msc-links]{amsrefs}
\usepackage{amsmath}
\usepackage{centernot}
\usepackage{enumerate}
\usepackage{verbatim}
\usepackage{bm}
\usepackage{amsmath,amsfonts,amscd,amssymb}
\usepackage{longtable,geometry}
\usepackage{color,soul}

\usepackage{mathtools}

\newtheorem{theorem}{Theorem}[section]
\newtheorem{lemma}[theorem]{Lemma}
\newtheorem{proposition}[theorem]{Proposition}
\newtheorem{corollary}[theorem]{Corollary}

\numberwithin{equation}{section}
\geometry{dvips,a4paper,margin=1.5in}

\theoremstyle{definition}

\newtheorem{example}{Example}
\theoremstyle{remark}
\newtheorem{remark}{Remark}

\newcommand{\IP}{\mathbf{P}}

\newcommand{\n}{\mathbf{n}}

\renewcommand{\Re}{\operatorname{Re}}
\renewcommand{\Im}{\operatorname{Im}}

\newcommand{\con}[1][]{\stackrel{#1}\leftrightarrow}

\newcommand{\ncon}[1][]{\overset{#1}{\not\leftrightarrow}}

\allowdisplaybreaks

\title{Circle patterns and critical Ising models}

\author{Marcin Lis}{
\address{Faculty of Mathematics\\ University of Vienna \\
Oskar-Morgenstern-Platz 1\\
1090 Wien}
\email{marcin.lis@univie.ac.at}}

\date{\today}

\keywords{Ising model, circle patterns, criticality}
\subjclass[2010]{82B20, 60C05, 05C50}

\begin{document}

\maketitle

\begin{abstract}
A circle pattern is an embedding of a planar graph in which each face is inscribed in a circle.
We define and prove magnetic criticality of a large family of Ising models on planar graphs whose dual is a circle pattern.
Our construction includes as a special case the critical isoradial Ising models of Baxter.

\end{abstract}
\section{Introduction}
The exact value of critical parameters is known only for a limited number of two-dimensional models of statistical mechanics.
One of the most prominent examples is the Ising model whose critical temperature on the square lattice
was famously computed by Kramers and Wannier~\cite{KraWan1} under a uniqueness hypothesis on the critical point. A confirmation of this condition came later as a corollary to the groundbreaking 
solution of the model provided by Onsager~\cite{Onsager}. 

The methods available at that time yielded also the critical temperature of inhomogeneous Ising models on the square lattice with different vertical and horizontal coupling constants.
A natural generalization of such models is the setting of arbitrary biperiodic graphs, i.e., weighted planar graphs whose group of symmetries includes~$\mathbb{Z}^2$. 
The critical point in the case of the square lattice with periodic coupling constants was first computed by Li~\cite{Li2012}, and the result was later
extended to all biperiodic graphs by Cimasoni and Duminil-Copin~\cite{CimDum}. Both approaches go through Fourier analysis of periodic matrices that arise from the combinatorial 
solutions of the Ising model due to Fisher~\cite{Fisher}, and Kac and Ward~\cite{KacWard} respectively. In particular, it is known that criticality in this setting is equivalent to the existence of nontrivial functions in the kernel of the associated Kac--Ward  matrix~\cite{Cimasoni2015}.

So far the only other class of planar graphs where critical parameters for the Ising model were explicitly known are the isoradial graphs defined by the condition 
that each face is inscribed in a circle with a \emph{common} radius. The critical Z-invariant coupling constants of Baxter~\cite{Baxter}
arise then as a solution to a system of equations requiring that the Ising model is invariant under a star-triangle transformation (which preserves isoradiality of the graph).
Criticality in the sense of statistical mechanics of the associated Ising model was proved in the periodic case in~\cite{CimDum}, and in the general case in~\cite{Lis2014a}.
Moreover, it is known that also in this setting the associated Kac--Ward matrix has a non-trivial kernel~\cite{Lis2014}. 

Isoradial graphs form the most general family of graphs where the critical Ising model was shown to be conformally invariant in the scaling limit~\cite{CheSmi12}.
The proof of Chelkak and Smirnov uses the fact that differential operators 
admit well behaved discretizations on isoradial graphs~\cite{Duffin, Mercat}. One should also mention that dimer models on isoradial graphs related to the Ising model were studied in~\cite{BdT1,BdT2,BdTR}.

A circle pattern is an embedding of a planar graph in which each face is inscribed in a circle of an \emph{arbitrary} radius. Circle patterns have been extensively studied 
in relation to discretely holomorphic functions~\cite{Bobenko1,Bobenko2,Bobenko3,Schramm}.
In this article we define coupling constants
for the Ising model on the dual graph of a circle pattern, and prove that the resulting model is critical.
Our construction, when restricted to isoradial graphs, recovers the critical Z-invariant coupling constants of Baxter.
Examples of new graphs where a critical Ising model can be defined in a local manner include, among many others, arbitrary trivalent graphs whose dual is a triangulation with acute angles.

Unlike the previous proofs of criticality~\cite{Li2012,CimDum,Lis2014a}, we do not invoke duality arguments. Instead,
we obtain exponential decay of correlations in the high-temperature regime using bounds on the operator norm of the Kac--Ward transition matrix~\cite{KLM, Lis2014a}.
To establish magnetic order in the low-temperature regime, we construct a non-trivial vector in the kernel of the critical Kac--Ward matrix (which readily implies infinite susceptibility), and then use known correlation inequalities to infer positive magnetization at lower temperatures.

Clearly, many of the natural questions about the model we study in this article remain open, 
and one of the most interesting is perhaps that of existence, conformal invariance and universality (among circle patterns) of the scaling limit.
The foundation for the proof of conformal invariance on isoradial graphs is the strong form of discrete holomorphicity that is satisfied by the fermionic observable 
associated with the critical Ising model. Here we provide the corresponding relations satisfied by the observable in the case of circle patterns.
These are only ``half'' of the relevant relations in the sense that the isoradial observable satisfies them on both the dual and primal graph, whereas the
generic observable does so only on the dual graph.

This article is organized as follows. In the next section we introduce the setup and state our main theorems. In Sect.~\ref{sec:KacWard}
we recall relevant results on the relation between the Ising model and the Kac--Ward matrix, and in Sect.~\ref{sec:proof} we provide the proofs of our results.
{\color{black} In Sect.~\ref{sec:holomorphic} we briefly discuss discrete holomorphicity of the critical fermionic observable on circle patterns, and in Sect.~\ref{sec:other} we present applications of our method to more general types of planar graphs (including s-embeddings of Chelkak~\cite{Chelkak})}.

\bigbreak\noindent\textbf{Acknowledgments} The author thanks Hugo Duminil-Copin for drawing his attention to, and explaining the inequality of Lemma~\ref{lem:DumCopTas}, and Dmitry Chelkak for
useful discussions about s-embeddings and possible extensions of the results contained in the first version of this article (see the discussion in Sect.~\ref{sec:other}).

This research was funded by EPSRC grants EP/I03372X/1 and EP/L018896/1 and was conducted when the author was at the University of Cambridge.

\section{Main results} \label{sec:main}
Let $\mathbb G=(V,E)$ and $\mathbb G^*=(V^*,E^*)$ be infinite, mutually dual, planar graphs embedded in the complex plane in such a way that each face of 
$\mathbb{G}^*$ is inscribed in a circle whose center is inside the closure of the face, and the vertices of $\mathbb{G}$ lie at the centers of the circles. 
We identify both $\mathbb G$ and $\mathbb G^*$ with their embedding and we say that $\mathbb{G}^*$ is a \emph{circle pattern}. (Note that we include the condition about the center being inside
the face in the definition of a circle pattern.)
For each $e=\{u,v\}\in E$, the dual edge $e^*$ is the common chord of the circles centered at $u$ and $v$. We denote by $\theta_{(u,v)}$ and $\theta_{(v,u)}$ half of the 
respective central angles given by the chord (see Fig.~\ref{fig:multiradial}). The ordered pairs $(u,v)$ and $(v,u)$ represent the two opposite directed versions of the undirected edge $\{u,v\}$. 
Note that for each $v\in V$, 
\begin{align} \label{eq:anglecond}
\sum_{u\sim v} \theta_{(v,u)} = \pi,
\end{align}
where the sum is over all vertices $u$ adjacent to $v$.
We will often assume the \emph{bounded angle property} of $\mathbb G^*$ by requiring that there exists $\varepsilon >0$ such that
\begin{align} \label{eq:regular}
\varepsilon \leq  \theta_{( u,v )} \leq \tfrac{\pi}2 -\varepsilon
\end{align}
for all directed edges $(u,v)$. 
This in particular implies that $\mathbb{G}$ is of bounded degree.
\begin{figure}
		\begin{center}
			\includegraphics[scale=0.85]{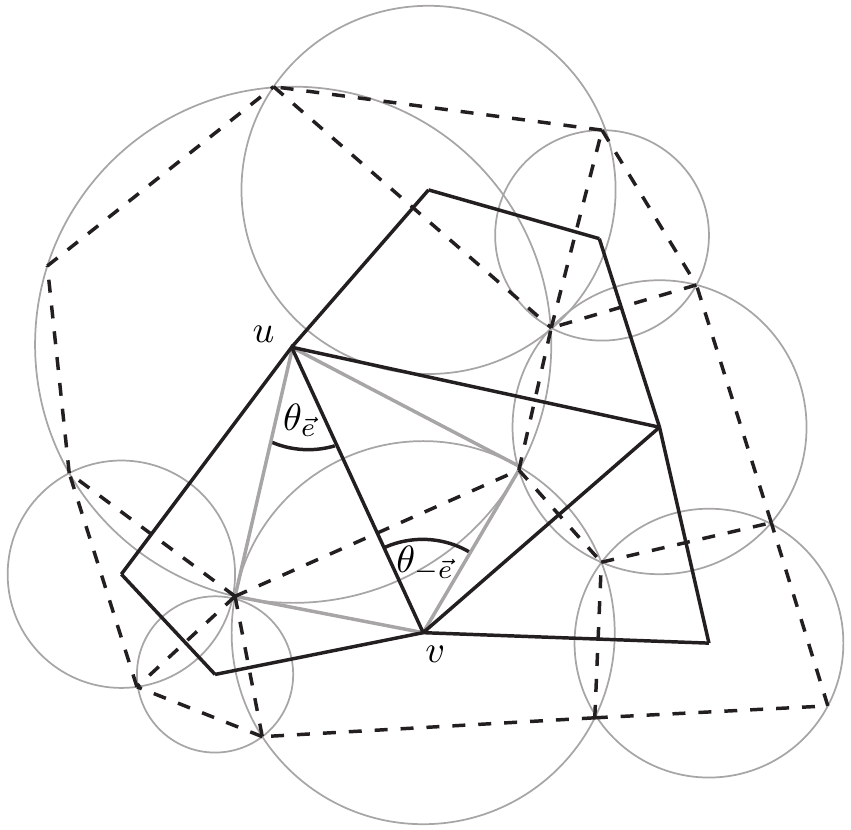}  
		\end{center}
		\caption{Local geometry of a circle pattern $\mathbb{G}^*$ and its dual $\mathbb{G}$ with dashed and solid edges respectively. Here, $\vec e=(u,v)$ and $-\vec e =(v,u)$}
	\label{fig:multiradial}
\end{figure}
Let $J=(J_e)_{e\in E}\in (0,\infty]^E$ be \emph{coupling constants} given by
\begin{align} \label{eq:cc}
\tanh J_{\{ u,v \}} = \sqrt{\tan \tfrac{\theta_{(u,v)}}2 \tan \tfrac{\theta_{(v,u)}}2}.
\end{align}
It is easy to see that the bounded angle property implies that $\|J \|_{\infty}<\infty$.

{\color{black}Note that if all the circumscribed circles have the same radius, then $\mathbb{G}$ and~$\mathbb{G}^*$ are isoradial, the angles satisfy $\theta_{(u,v)}=\theta_{(v,u)}$, and \eqref{eq:cc} defines the critical Z-invariant
coupling constants of Baxter~\cite{Baxter}.
On the other hand, the coupling constants \eqref{eq:cc} are a special case of the ones introduced by Chelkak \cite{Chelkak} in the setting of graphs called \emph{$s$-embeddings} which generalize circle patterns. 
However, no questions about criticality are asked in~\cite{Chelkak} and these are the main focus of the present article {\color{black}(In Sect.~\ref{sec:other} we discuss applications of our approach to general s-embeddings which yields partial answers to the question of criticality).}
Also, the model considered here is more general than the one of Bonzom, Costantino and Livine~\cite{BCL}, who studied a supersymmetric relation between the planar Ising model and spin networks.
The authors of \cite{BCL} considered the same coupling constants as~\eqref{eq:cc} but only in the case when $\mathbb G^*$ is a triangulation (and hence a circle pattern).}

We will study Ising models on finite connected subgraphs $G=(V_G,E_G)$ of $\mathbb{G}$ with coupling constants $J$.
To this end, let 
\[
\partial V_G = \{ v\in V_G: \exists u \in V \setminus V_G,\ \{u,v\} \in E\}
\]
be the \emph{boundary} of $G$, and let $\Omega_G=\{-1,+1\}^{V_G}$
be the space of \emph{spin configurations}. 
The \emph{Ising model}~\cite{Ising} at \emph{inverse temperature} $\beta>0$ with \emph{free} or \emph{`+' boundary conditions} conditions $\Box \in \{f,+\}$ is a probability measure on $\Omega_G$ given by
\begin{align*}
\IP^{\Box}_{G,\beta}(\sigma) = \frac{1}{Z^{\Box}_{G,\beta}}\exp\Big(\beta  \sum_{\{u,v\} \in E_G} J_{\{u,v\}}\sigma_u\sigma_v + 
 \mathbf{1}_{\{\Box=+\}}\beta \mathop{\sum_{v\in \partial V_G}  }_{u\notin V_G} J_{\{u,v\}}\sigma_v, 
 \Big),
 \end{align*}
where $Z^{\Box}_{G,\beta}$ is the normalizing constant called the \emph{partition function}.
We write $\langle\cdot  \rangle^{\Box}_{G,\beta}$ for the expectation with respect to $\IP^{\Box}_{G,\beta}$. By the second Griffiths inequality, we can define the \emph{infinite volume limits} of correlation functions by
\begin{align*}
\big \langle  \prod_{u\in A}  \sigma_u \big\rangle^{\Box}_{\mathbb{G},\beta}= \lim_{G\nearrow \mathbb{G}}\big \langle\prod_{u\in A}  \sigma_u \big \rangle^{\Box}_{G,\beta}, 
\end{align*}
where $A \subset V$ is finite, and
where the limit is taken over any increasing family of finite connected subgraphs $G$ containing $A$ and exhausting $\mathbb{G}$.

Let $d(\cdot,\cdot)$ be the graph distance on $\mathbb{G}$. The following is our main result identifying a phase transition at $\beta_c=1$: 
\begin{theorem} \label{thm:main}
Let $\mathbb{G}^*$ be a circle pattern satisfying the bounded angle property, and
consider the Ising model on $\mathbb{G}$ with coupling constants as in~\eqref{eq:cc}. Then
\begin{itemize}
\item[(i)] for every $\beta <1$, there exists $C_{\beta}>0$ such that for all $u,v\in V$,
\[
\langle \sigma_u \sigma_v \rangle^{f}_{\mathbb{G},\beta} \leq e^{-C_{\beta} d(u,v)},
\]
\item[(ii)] if the radii of circles are uniformly bounded from above, then for every $v\in V$, there exists $t_v>0$ such that for all $\beta > 1$, 
\[
\color{black}\langle \sigma_v \rangle^{+}_{\mathbb{G},\beta} \geq{1-\beta^{-t_v}}.
\]
Moreover, if the radii of circles are uniformly bounded from below, then one can choose one such $t=t_v$ for all $v$.
\end{itemize}
\end{theorem}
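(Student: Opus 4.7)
The proof of both parts hinges on the Kac--Ward matrix $T_\beta$ acting on functions on directed edges of $\mathbb{G}$, to be introduced in Sect.~\ref{sec:KacWard}. The key facts are that $\langle \sigma_u\sigma_v \rangle^f$ is represented by a matrix element of $(I-T_\beta)^{-1}$, and that $T_\beta(\vec e,\vec f)$ is supported on pairs of directed edges sharing an intermediate vertex.

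For part (i), the plan is to establish an operator norm bound $\|T_\beta\|_{\ell^2(\vec E)}<1$ for $\beta<1$, with a gap uniform in the bounded-angle constant $\varepsilon$ of \eqref{eq:regular}. This should reduce to a per-vertex quadratic-form computation exploiting the coupling-constant identity \eqref{eq:cc} and the angle-sum rule \eqref{eq:anglecond}, along the lines of the corresponding isoradial estimate in \cite{KLM,Lis2014a}. Exponential decay of $\langle \sigma_u\sigma_v\rangle^f_{\mathbb G,\beta}$ then follows from the Neumann series $(I-T_\beta)^{-1}=\sum_{n\ge 0}T_\beta^n$, since $T_\beta^n(\vec e,\vec f)=0$ unless $n$ exceeds the graph distance between the endpoints of the two directed edges.

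The crucial step for part (ii) is to construct a bounded, non-zero $\psi\colon \vec E\to\mathbb C$ with $T_1\psi=\psi$. The proposal is, for each $\vec e=(u,v)$, to take a formula of the form
\[
\psi(\vec e)=\alpha_{\vec e}\sqrt{\tan\tfrac{\theta_{(v,u)}}{2}},
\]
where $\alpha_{\vec e}$ is a unit complex factor encoding the geometry of the chord $e^*$ on the circle centered at $v$, modeled on the isoradial fermionic observable of \cite{Lis2014} and on the $s$-embedding observables of \cite{Chelkak}. Verifying $T_1\psi=\psi$ should reduce at each $v\in V$ to a trigonometric identity which, after the half-angle formula for $\tan(\theta/2)$ and the chord decomposition of the circle at $v$, is equivalent to \eqref{eq:anglecond}. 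This is the main obstacle: unlike in the isoradial case, $\theta_{(u,v)}$ and $\theta_{(v,u)}$ may differ, so the observable must carry asymmetric data and the vertex identity genuinely mixes contributions from both circles adjacent to an edge.

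From the existence of $\psi\in\ker(I-T_1)$ one deduces divergence of the critical susceptibility $\sum_v\langle\sigma_u\sigma_v\rangle^f_{\mathbb G,1}=\infty$ via the Kac--Ward representation of the two-point function, as in \cite{Lis2014}. Combining this with the inequality of Duminil-Copin and Tassion (Lemma~\ref{lem:DumCopTas}) and Griffiths monotonicity in $\beta$ then yields $\langle\sigma_v\rangle^+_{\mathbb G,\beta}>0$ for every $\beta>1$. To upgrade to the quantitative bound $\langle\sigma_v\rangle^+_{\mathbb G,\beta}\ge 1-\beta^{-t_v}$, one passes to the FK--Ising representation with wired boundary and controls $\mathbb{P}^+(\sigma_v=-1)$ by a Peierls-type sum over dual contours encircling $v$; the upper bound on circle radii forces such contours to traverse a uniform minimum Euclidean distance and, combined with the behaviour of $\tanh(\beta J_e)$ for $\beta>1$, yields polynomial-in-$\beta$ decay with exponent $t_v$ depending on the local geometry at $v$, while a uniform lower bound on radii makes this exponent uniform in $v$.
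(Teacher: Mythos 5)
Your part (i) is essentially the paper's argument: bound $\langle\sigma_u\sigma_v\rangle^f$ by entries of $T^{-1}=\sum_k\Lambda^k$, prove the per-vertex operator-norm estimate $\|\Lambda\|\le 1-c(1-\beta)$ from \eqref{eq:anglecond} as in \cite{Lis2014a}, and use that $[\Lambda^k]_{\vec e,\vec g}$ vanishes for $k<d(u,v)-1$. One detail you omit: the upper bound on the two-point function by $|T^{-1}_{\vec e,\vec g}|$ requires introducing signed weights $\vec x^{\tau}$ along a path joining $u$ and $v$ (Lemma~\ref{lem:fermionbound}); without the signs one only gets the lower bound. This is harmless since the norm estimate is insensitive to signs.

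For part (ii) your first step (an explicit eigenvector of $\Lambda$ at $\beta=1$) is the right idea, but the route from there has a genuine gap. The bound $\langle\sigma_v\rangle^+_{\mathbb G,\beta}\ge 1-\beta^{-t_v}$ must hold for \emph{all} $\beta>1$, including $\beta$ arbitrarily close to $1$; a Peierls/contour estimate only closes deep in the low-temperature regime and in any case yields bounds decaying exponentially, not polynomially, in $\beta$, so it cannot produce the stated inequality. The polynomial exponent comes from integrating the Duminil-Copin--Tassion inequality, and for that you need a lower bound on $\inf_{S\ni v}\varphi_{S,\beta}(v)$ \emph{uniform over all finite} $S$; divergence of the susceptibility at $\beta=1$ does not by itself supply this for use with Lemma~\ref{lem:DumCopTas} (you would be invoking the full sharpness theorem of \cite{DumCopTas}, not the stated differential inequality). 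The paper instead proves directly that $\varphi_{S,1}(v)\ge t_v:=\sqrt{r/R}\,\tan\tfrac{\varepsilon}{2}$ for every finite $S\ni v$ (Lemma~\ref{lem:phi}): writing $\rho=T^{-1}_{\bar G}(\vec x)\zeta$ with $\zeta$ supported on the outward half-edges of $S$ and using the \emph{lower} bound of Lemma~\ref{lem:fermionbound} to dominate $|T^{-1}_{\vec e,\vec g}|$ by two-point functions converts $\rho_{\vec e}>0$ into a lower bound on the boundary correlation sum $\varphi_{S,1}(v)$; monotonicity in $\beta$ and Gr\"onwall then give $1-\langle\sigma_v\rangle^+\le\beta^{-t_v}$. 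A smaller point: the correct eigenvector is simply $\rho_{\vec e}=\sqrt{|e^*|}$, real, positive and reversal-symmetric; since $|e^*|=2r_v\sin\theta_{(v,u)}=2r_u\sin\theta_{(u,v)}$, the verification at each vertex involves only the circle centered there and reduces to a telescoping sum, so the ``mixing of the two adjacent circles'' you flag as the main obstacle does not arise, and your ansatz with modulus $\sqrt{\tan(\theta_{(v,u)}/2)}$ times a unimodular factor is not the right one.
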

We also show that the magnetic susceptibility diverges at criticality:
\begin{theorem} \label{thm:susc}
Let $\mathbb{G}^*$ be as in Theorem~\ref{thm:main}, and assume that the radii of circles are uniformly bounded from below. Then for all $v\in V$,
\[
\chi_{\mathbb{G},\beta}(v)=\sum_{u\in V} \langle \sigma_u \sigma_v \rangle^{f}_{\mathbb{G},\beta}  < \infty 
\]
if and only if $\beta<1$.
\end{theorem}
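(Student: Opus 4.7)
My plan for Theorem~\ref{thm:susc} is to treat the two directions separately, using Theorem~\ref{thm:main}(i) for the ``only if'' implication and, for the ``if'' implication, the construction of a nontrivial element in the kernel of the critical Kac--Ward matrix (which is promised in the introduction and will be developed in Sections~\ref{sec:KacWard}--\ref{sec:proof}) together with Griffiths' second inequality.

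For $\beta<1$, I would start from the exponential decay estimate $\langle \sigma_u\sigma_v\rangle^{f}_{\mathbb{G},\beta}\le e^{-C_\beta d(u,v)}$ supplied by Theorem~\ref{thm:main}(i), and combine it with a growth bound on graph-distance balls in $\mathbb{G}$. The bounded angle property yields bounded degree, and the uniform lower bound on the radii forces each face of $\mathbb{G}^*$ to contain a Euclidean disk of radius $\gtrsim r_0\sin\varepsilon$ around its dual vertex; together with planarity this should give sub-exponential (in fact polynomial) growth of $|B_n(v)|$, so that $\chi_{\mathbb{G},\beta}(v)\le\sum_n|B_n(v)|\,e^{-C_\beta n}<\infty$. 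Since the radii are only bounded below, edges can still be arbitrarily long; the more robust route is to bypass the ball-growth argument and instead deduce the convergence of $\sum_u\langle\sigma_u\sigma_v\rangle^{f}_{\mathbb{G},\beta}$ directly from the operator-norm bound $\|T\|<1$ on the Kac--Ward transition matrix that already underpins Theorem~\ref{thm:main}(i): the corresponding Neumann series for $(I-T)^{-1}$ is summable, and the standard fermionic representation of the two-point function then yields a finite sum.

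For $\beta\ge1$, my first step would be monotonicity: by Griffiths' second inequality $\beta\mapsto\langle\sigma_u\sigma_v\rangle^{f}_{\mathbb{G},\beta}$ is non-decreasing, and hence so is $\chi_{\mathbb{G},\beta}(v)$, reducing matters to proving $\chi_{\mathbb{G},1}(v)=\infty$ at the critical point. At $\beta=1$ I would then invoke the nontrivial kernel vector of the critical Kac--Ward matrix: via the Kac--Ward representation of the two-point function as a resolvent entry, the existence of a zero eigenvalue at criticality is incompatible with absolute summability of $u\mapsto\langle\sigma_u\sigma_v\rangle^{f}_{\mathbb{G},1}$ and forces $\chi_{\mathbb{G},1}(v)=\infty$. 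The hard step, in my view, is precisely this implication in the infinite-volume setting: one must identify the correct fermionic observable on $\mathbb{G}$, propagate the kernel vector through the infinite-volume limit of finite Kac--Ward matrices, and show that finiteness of $\chi$ would contradict its existence. Once this is established, the $\beta<1$ direction is essentially a summation argument, and the result follows.
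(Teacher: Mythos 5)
Your split into the two directions matches the paper, and the $\beta<1$ half is essentially correct as the paper does it: exponential decay from Theorem~\ref{thm:main}(i) plus quadratic growth of graph-distance balls, the latter coming from the disjoint private disks of radius $\gtrsim r_0\sin\varepsilon$ around the dual vertices (together with the upper bound on the radii, which is in force since Lemma~\ref{lem:phi} needs it anyway, so that graph balls of radius $n$ sit inside Euclidean balls of radius $O(n)$). However, your proposed ``more robust route'' that bypasses ball growth does not work as stated: the bound $\|\Lambda\|<1$ on $\ell^2(\vec E)$ gives that each row of $(I-\Lambda)^{-1}=\sum_k\Lambda^k$ is in $\ell^2$ with norm at most $(1-\|\Lambda\|)^{-1}$, but $\chi$ is an $\ell^1$-type sum of a row over an infinite index set, and an $\ell^2$ bound alone cannot control it. Any argument here must reinsert a sub-exponential volume bound (e.g.\ shell by shell), so the geometric input you were trying to avoid is unavoidable; you should simply keep the ball-growth argument.

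For $\beta\ge 1$ you have correctly named the input (the null vector of the critical Kac--Ward matrix, Proposition~\ref{thm:eigenvector}) and the reduction by Griffiths monotonicity, but the step you explicitly flag as ``the hard step'' is exactly the content of the proof, and it is left open in your proposal. The paper's mechanism is Lemma~\ref{lem:phi}: for a finite induced subgraph $G$ with vertex set $S\ni v$, one forms the augmented graph $\bar G$ with outward half-edges, observes that $\zeta=T_{\bar G}(\vec x)\rho$ is supported on the outward-pointing half-edges $\vec n(\partial G)$ (because $\rho$ is annihilated by the Kac--Ward matrix at every interior vertex), and then writes $\rho_{\vec e}=\sum_{\vec g\in\vec n(\partial G)}T^{-1}_{\bar G}(\vec x)_{\vec e,\vec g}\,\rho_{\vec g}$ for $\vec e\in\mathrm{In}_v$. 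The lower bound of Lemma~\ref{lem:fermionbound}, $|T^{-1}_{\bar G}(\vec x)_{\vec e,\vec g}|\le \vec x_{\vec e}\vec x_{-\vec g}\langle\sigma_v\sigma_w\rangle^f_{G,1}$, together with the two-sided bounds on $\rho$ coming from the radii and angle assumptions, converts this identity into a uniform positive lower bound on $\varphi_{S,1}(v)=\sum_{u\in S}\sum_{w\notin S}\tanh(J_{\{u,w\}})\langle\sigma_v\sigma_u\rangle^f_{S,1}$ over all finite $S\ni v$. Since $\varphi_{S,\beta}(v)$ is non-decreasing in $\beta$ (second Griffiths inequality), the same uniform bound holds for all $\beta\ge1$; taking $S=B_n(v)$ and noting $\varphi_{B_n,\beta}(v)\le C\sum_{u\in\partial B_n(v)}\langle\sigma_v\sigma_u\rangle^f_{\mathbb G,\beta}$, finiteness of $\chi_{\mathbb G,\beta}(v)$ would force this tail to vanish as $n\to\infty$, a contradiction. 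The quantity $\varphi_{S,\beta}(v)$ (the same one appearing in the differential inequality of Lemma~\ref{lem:DumCopTas}) is the bridge between the kernel vector and the divergence of the susceptibility; without introducing it, or something equivalent, your sketch of this direction remains incomplete.
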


Our main tool is the Kac--Ward transition matrix $\Lambda$ associated with the Ising model~\cite{KacWard}.
The exponential decay of correlations for $\beta <1$ will follow from our previous result~\cite{Lis2014a} which says that the Euclidean operator norm
of $\Lambda$ is strictly smaller than $1$. This part does not actually require the faces of $\mathbb{G}^*$ to be cyclic polygons and holds true 
for an arbitrary choice of angles $\theta_{(u,v)}$ satisfying condition \eqref{eq:anglecond}.
The complementary lower bound on the two-point function which identifies a phase transition at $\beta_c=1$ will follow from a construction of an eigenvector of $\Lambda$ with eigenvalue $1$ combined 
with known correlation inequalities.
This part crucially relies on the global geometry of the embedding.

We finish this section with a list of examples of circle patterns which are not necessarily isoradial. 

\begin{example} \label{ex:2}
Any \emph{acute triangulation} $\mathbb{G^*}$ of the plane, i.e., a planar graph whose all faces are acute triangles, is a circle pattern. In this case, $\mathbb{G}$ is a trivalent graph and
this is the set-up where the relation between the planar Ising model and spin networks was studied in \cite{BCL}. The authors heuristically derived the same coupling constants \eqref{eq:cc}
on the spin network side, and asked if they correspond to critical Ising models on a class of graphs that is larger than the isoradial graphs. 
Theorem \ref{thm:main} answers this question in the affirmative.
\end{example}

\begin{example} \label{ex:3} 
A \emph{circle packing} is a representation of a planar graph $\mathbb G$ where the vertices are the centers of interior-disjoint disks in the plane, and two vertices
are adjacent if the respective discs are tangent.
Consider a circle packing of an infinite planar graph $\mathbb G$ such that each face is {convex}. Then the dual graph $\mathbb{G}^*$ can be simultaneously circle-packed 
in such a way that the circles centered at the endpoints of an edge $e$ and its dual edge $e^*$ meet orthogonally at one point (see Fig.~\ref{fig:packing}). 
Let $Q$ be the {quadrangulation} whose vertices are the vertices and faces of $\mathbb G$, and whose edges are of the form $\{u,u^*\}$, 
where $u$ is a vertex of $\mathbb G$ and $u^*$ is one of the faces incident on $u$.
Then $Q^*$ is the \emph{medial graph} of $\mathbb G$ with vertices
given by the meeting points of the circles, and with edges connecting every pair of consecutive vertices around every circle. By construction, $Q^*$ is a circle pattern.
\begin{figure} 
		\begin{center}
			\includegraphics[scale=0.85]{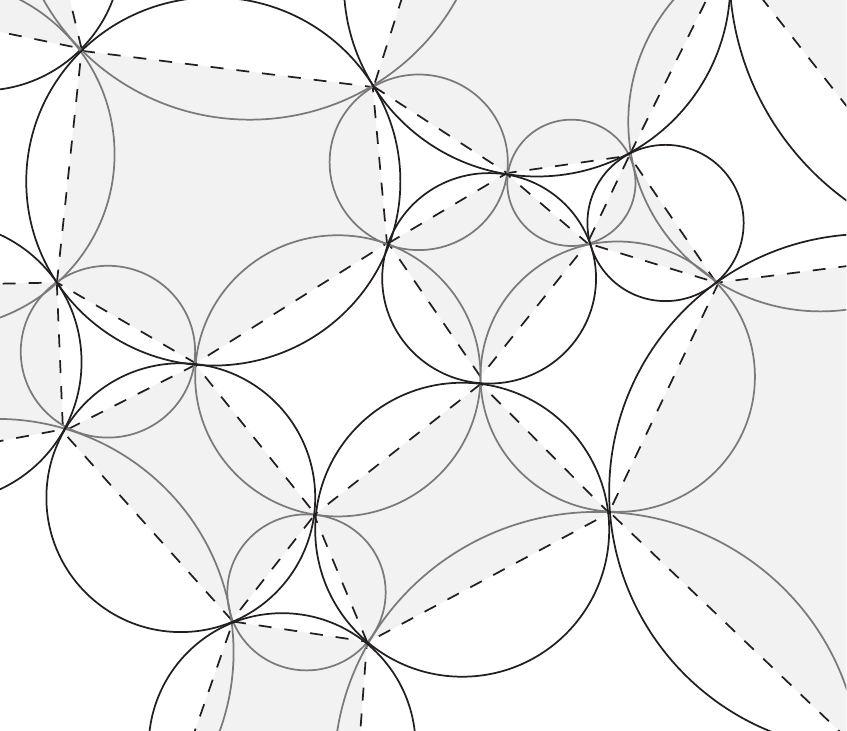}  
		\end{center}
		\caption{A piece of a circle packing and its dual. The quadrangulation $Q$ from Example~\ref{ex:3} is the graph whose vertices are the regions bounded by dashed lines.
		Its dual $Q^*$ is a circle pattern}
	\label{fig:packing}
\end{figure}

\end{example}

\begin{example} \label{ex:4} 
Let $\mathbb{G}$ and $\mathbb{G}^*$ be be mutually planar graphs embedded in such a way that each pair of a primal and dual edge meets at a right angle. 
Let $Q$ be the quadrangulation whose vertices are the vertices, faces and edges of $\mathbb{G}$, and whose edges are of the form $\{e,v\}$ where $e$ is an edge of $\mathbb{G}$ and $v$ is either a vertex or a face incident on $e$.
Note that each face of $Q$ has at least two right interior angles, and hence $Q$ is a circle pattern. 
The dual graph $Q^*$ is the $4$-regular graph that is obtained from $\mathbb{G}$ by replacing each edge of $\mathbb{G}$ by a quadrilateral and each vertex of degree $d$ by a $d$-gon face.
\end{example}

\begin{example} \label{ex:5}
Let $\mathbb{G} =\mathbb{Z}^2$ and fix $0<a\leq b<\infty$. For the $i$-th column of vertical edges, choose a coupling constant $J_i \in [a,b]$.
To each horizontal edge between column $i$ and $i+1$ assign a coupling constant $J_{i,i+1}$ satisfying 
\[
\tanh J_{i,i+1} = e^{-J_i-J_{i+1}}.
\]
The resulting Ising model is critical as these coupling constants are computed using \eqref{eq:cc} applied to the dual lattice $\mathbb{G}^*\simeq \mathbb{Z}^2$ with appropriately stretched or squeezed columns.
Note that if $J_i $ are constant, the classical anisotropic critical Ising model is recovered, and 
upon setting $J_i=-\frac12 \log(\sqrt{2}-1)$ we get the homogenous critical model.  
\end{example}

\section{The Kac--Ward operator and the fermionic observable} \label{sec:KacWard}
{\color{black}In this section we define the Kac--Ward matrix on a general graph in the plane and relate its inverse to the spin fermionic observable of Smirnov~\cite{Lis2014,Cimasoni2015}.
We also recall the exact value of the Euclidean operator norm of the associated transition matrix computed in~\cite{Lis2014a} (Lemma~\ref{lem:normbound}). Using this we later prove in Lemma~\ref{lem:normbound1} that in the 
setting of Theorem~\ref{thm:main} the operator norm is smaller than~$1$ if and only if $\beta<1$.
The novel contribution of this section is that at $\beta=1$ we construct an eigenvector of eigenvalue $1$ of the transition matrix, or equivalently, a null vector of the Kac--Ward matrix (Proposition~\ref{thm:eigenvector}).
{\color{black} The consequences of the existence of such a vector for biperiodic graphs and s-embeddings are discussed in Sect.~\ref{sec:other}}.

Let $G=(V_G,E_G)$ be a graph drawn in the plane with possible edge crossings. Let $\vec E_G$ be the set of all directed versions of the edges in $E_G$, and let $\vec x=(\vec x_{\vec e})_{\vec e \in \vec E_G}$ be a vector of associated real weights.
For a directed edge $\vec e=(u,v) $, we write $t_{\vec e}=u$ for its {tail}, $h_{\vec e}=v$ its {head}, $-\vec e=(v,u)$ its {reversal},
and $e =\{u,v\} \in E_G$ for its undirected equivalent.}

The \emph{Kac--Ward transition} matrix $\Lambda_G(\vec x)$ is 
a matrix indexed by $\vec E_G$ given by
\begin{align} \label{eq:transition}
\Lambda_G(\vec x)_{\vec e, \vec g} = \begin{cases}
		\vec x_{-\vec e} \vec x_{\vec g} e^{\frac{i}{2}\angle(\vec e, \vec g)}
		& \text{if $h_{\vec e} = t_{\vec g}$ and $\vec g \neq -\vec e$}; \\
		0 & \text{otherwise},
	\end{cases}
\end{align}
where 
\begin{align*}
\angle(\vec e, \vec g)= \text{Arg}\Big(\frac{h_g-t_g}{h_e-t_e}\Big)  \in  (-\pi,\pi]
\end{align*} 
is the \emph{turning angle} from $\vec e$ to $\vec g$. Here we identified the vertices with the corresponding complex numbers given by the embedding.
The \emph{Kac--Ward matrix} is defined by 
\[
T_G(\vec x)=\textnormal{Id}-\Lambda_G(\vec x),
\] 
where $\textnormal{Id}$ is the identity matrix indexed by $\vec E_G$.

We now assume that $G$ is finite.
Let $\mathcal{E}_G$ be the collection of \emph{even subgraphs} of $G$, i.e., subsets $\omega$ of $E_G$ such that all vertices in $V_G$ have even degree in (the subgraph induced by) $\omega$.
Let $ x=( x_{ e})_{ e \in  E_G}$ be weights on the undirected edges given by 
\begin{align} \label{eq:factorization}
x_e= \vec x_{\vec e} \vec x_{-\vec e},
\end{align}
where $\vec e, -\vec e$ are the two directed versions of $e$. Consider the partition function
\begin{align} \label{eq:KW}
Z_G(x)= \sum_{\omega \in\mathcal{E}_G} (-1)^{C({\omega})} \prod_{e\in \omega}x_e,
\end{align}
where $C(\omega)$ is the number of pairs of edges in $\omega$ that cross. {\color{black}The seminal identity of Kac and Ward~\cite{KacWard} (which in the general form allowing edge crossings was established in~\cite{Sherman1, DolEtAl, KLM}) reads}
\begin{align} \label{eq:KacWard}
Z_G(x) = \sqrt{{\det} T_G(\vec x)}.
\end{align}
If $G$ has no edge crossings, then the high-temperature expansion of the Ising partition function $Z^{f}_{G,\beta}$ is equal, up to an explicit constant,
to $Z_G(x)$ with $x_e =\tanh \beta J_e$, and hence \eqref{eq:KacWard} establishes an intrinsic relation between the Ising model and the Kac--Ward operator.

It turns out that one can go further and also express the inverse of $T_G(\vec x)$ in terms of related partition functions which, in addition to an even subgraph, involve a path weighted by a complex factor.
We now define these objects which are forms of the \emph{fermionic observable} introduced by Smirnov~\cite{smirnov}.
For a directed edge $ \vec e \in \vec E$, let $m_{ \vec e}= (t_{\vec e}+h_{\vec e})/2$ be its {midpoint}.
For $\vec e,\vec g \in \vec E_G$, $\vec e\neq -\vec g$, we define a modified graph $G_{\vec e, \vec g}$ with vertex set $V_G \cup \{ m_{\vec e}, m_{ \vec g}\}$
and edge set $(E_G \setminus \{ e,g\}) \cup \{ \{m_{\vec e},h_{\vec e}\},\{t_{\vec g},m_{\vec g}\}\}$, and define the weights of the undirected \emph{half-edges} to be 
$x_{\{m_{\vec e},h_{\vec e}\}} = \vec x_{\vec e}$ and $x_{\{t_{\vec g},m_{\vec g}\}} = \vec x_{-\vec g}$.
Let $\mathcal{E}_G(\vec e, \vec g)$ be the collection of sets of edges $\omega$ of $G_{\vec e, \vec g}$ containing $ \{m_{\vec e},h_{\vec e}\}$ and $\{t_{\vec g},m_{\vec g}\}$, 
and such that all vertices in $V_G$ have even degree in $\omega$. (Note that we do not require that $m_{\vec e}$ and $m_{\vec g}$ have even degree.)
From standard parity arguments, it follows that each $\omega \in \mathcal{E}_G(\vec e, \vec g)$ contains a self-avoiding path starting at $m_{\vec e}$ and ending at $m_{\vec g}$.
We denote by $\gamma_{\omega}$ the left-most such path, i.e., {\color{black}a path which at every step takes the left-most possible turn}.
We also write $\alpha(\gamma_{\omega})$ for the total turning angle of $\gamma_{\omega}$, i.e., the sum of turning angles between consecutive steps of $\gamma_{\omega}$.

Let $F_G(\vec x)$ be a matrix indexed by $\vec E_G$ given by
\begin{align} \label{eq:fo}
F_{G}(\vec x)_{\vec e, \vec g} = 
 \begin{cases}
\text{Id}_{\vec e, \vec g} + \frac{1}{Z_G(x)} \sum_{\omega \in  \mathcal{E}_G(\vec e, \vec g) }  e^{-\tfrac{i}2 \alpha(\gamma_{\omega})} \prod_{e \in \omega} x_e, & \text{if } \vec g \neq -\vec e; \\
 0 & \text{otherwise,}
\end{cases}
\end{align}
where $\vec x$ and $x$ are related via~\eqref{eq:factorization}. For a graph $G$ with no edge crossings, the following identity:
\begin{align} \label{eq:lis}
\overline{F_G(\vec x)}=T_G^{-1}(\vec x)
\end{align}
which gives foundations for our results was proved in~\cite{Lis2014,Cimasoni2015}, and is valid for any weight vector $x$.
{\color{black}We refer the interested reader to \cite{CCK} for a recent detailed account of the relationship of \eqref{eq:lis} with other combinatorial approaches to the Ising model.}

The main new contribution of this section is  the following construction of a nontrivial vector in the kernel of the critical Kac--Ward matrix defined on the dual of a circle pattern. 

\begin{proposition} \label{thm:eigenvector}
Let $\mathbb G^*$ be a circle pattern, and let $\vec x$ be weights on the directed edges of $\mathbb G$ given by $\vec x_{\vec e}=\sqrt{\tan\tfrac{ \theta_{\vec e}}2}$, $\vec e\in \vec E$.
For an edge $e^*$ of $\mathbb G^*$, let $|e^*|$ be its length, and define $\rho_{\vec e} =\rho_{-\vec e}  =\sqrt{|e^*|}$,
where $\vec e$ and $-\vec e$ are the two directed edges crossing $e^*$.
Then
\[
\Lambda_{\mathbb{G}}(\vec x) \rho = \rho.
\]
\end{proposition}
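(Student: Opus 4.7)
The plan is to verify the eigenvector equation $(\Lambda_{\mathbb G}(\vec x)\rho)_{\vec e}=\rho_{\vec e}$ separately for each directed edge $\vec e=(u,v)\in\vec E$, reducing it to a purely local trigonometric identity on the circle of $\mathbb G^*$ centered at $v$. Placing $v$ at the origin and $u$ on the positive real axis, I would label the neighbors of $v$ cyclically (say counter-clockwise) as $u=w_0,w_1,\ldots,w_{d-1}$ and set $\psi_k:=\arg(w_k)$, $\theta_k:=\theta_{(v,w_k)}$. For $\vec g_k=(v,w_k)$ the turning angle then satisfies $\angle(\vec e,\vec g_k)=\psi_k-\pi\in(-\pi,\pi)$ for $k\ne 0$. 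If $r_v$ denotes the radius of the face of $\mathbb G^*$ at $v$, elementary circle geometry gives $|e_k^*|=2r_v\sin\theta_k$, and the fact that consecutive chords $e_k^*$ and $e_{k+1}^*$ share an endpoint on this circle translates into
\[
\psi_k+\theta_k=\psi_{k+1}-\theta_{k+1}.
\]

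Substituting the definitions of $\vec x$ and $\rho$ into $\sum_{\vec g:\,t_{\vec g}=v,\,\vec g\ne-\vec e}\Lambda_{\mathbb G}(\vec x)_{\vec e,\vec g}\,\rho_{\vec g}=\rho_{\vec e}$ and using the elementary identities $\tan(\theta/2)\sin\theta=2\sin^2(\theta/2)$ and $\sin\theta_0=2\tan(\theta_0/2)\cos^2(\theta_0/2)$, the common factor $\sqrt{2r_v\tan(\theta_0/2)}$ drops out and the equation reduces to
\[
\sum_{k=1}^{d-1}\sin(\theta_k/2)\,e^{i\psi_k/2}=i\cos(\theta_0/2).
\]
I would prove this by telescoping: the sum-to-product formula gives $2\sin(\theta_k/2)\,e^{i\psi_k/2}=-i\bigl(e^{i(\psi_k+\theta_k)/2}-e^{i(\psi_k-\theta_k)/2}\bigr)$, and the chord-sharing relation makes the ``$+$'' exponent at $k$ equal to the ``$-$'' exponent at $k+1$, so the sum from $k=1$ to $d-1$ collapses to $-i\bigl(e^{i(\psi_{d-1}+\theta_{d-1})/2}-e^{i(\psi_1-\theta_1)/2}\bigr)$. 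The second exponent equals $\psi_0+\theta_0=\theta_0$; using \eqref{eq:anglecond} (which forces $\psi_d=\psi_0+2\pi=2\pi$ with $\theta_d=\theta_0$) the first equals $2\pi-\theta_0$. The telescoped sum therefore evaluates to $-i(-e^{-i\theta_0/2}-e^{i\theta_0/2})=2i\cos(\theta_0/2)$, yielding the identity after dividing by $2$.

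The argument is entirely local to the vertex $v$; the only global input is \eqref{eq:anglecond}, which guarantees that walking once around $v$ closes the telescope. The main difficulty is conceptual rather than technical: one must recognize that the natural primitives to telescope are the half-angle exponentials $e^{i(\psi_k\pm\theta_k)/2}$, which parametrize the angular positions of the vertices of the face of $\mathbb G^*$ around $v$. Once this is noticed, the circle-pattern structure (consecutive chords meeting on a common circle centered at $v$) does the rest, and no global property of the embedding beyond \eqref{eq:anglecond} is required.
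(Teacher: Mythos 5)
Your proposal is correct and follows essentially the same route as the paper: both reduce the eigenvector equation to a local identity at each vertex $v$ (the paper via the Hermitian block $\tilde\Lambda(\vec x)^v$ of $J\Lambda$, which is the same computation since $J\rho=\rho$), express the angular positions of the outgoing edges through the cumulative half-angles $\theta_{\vec e_j}$, and evaluate $\sum_k \sin(\theta_k/2)\,e^{i(\cdot)/2}$ by exactly the telescoping of half-angle exponentials you describe, closed up by \eqref{eq:anglecond}. The remaining differences (counterclockwise vs.\ clockwise ordering, carrying $r_v$ vs.\ normalizing $r_v=1$) are cosmetic.
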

This identity will directly follow from the next lemma. Let $\mathbb G=(V,E)$, and let $\vec E$ be the set of directed edges of $\mathbb G$.
We define $\textnormal{In}_v = \{ \vec e \in \vec E: h_{\vec e} =v \}$ and $\textnormal{Out}_v = \{ \vec e \in \vec E: t_{\vec e} =v \}$ to be the directed edges pointing at and away from the vertex $v$ respectively.
Let $J$ be the involutive automorphism of $\mathbb{C}^{\vec E}$ induced by the map $\vec e \mapsto -\vec e$, and let $\tilde \Lambda(\vec x) = J \Lambda_{\mathbb{G}}(\vec x) $. As was noted in~\cite{Lis2014a}, and is easily seen from 
the definition of the transition matrix~\eqref{eq:transition}, $\tilde \Lambda(\vec x) $ is
a Hermitian, block-diagonal matrix with blocks $\tilde \Lambda(\vec x) ^v$, $v\in V$, acting on the linear subspace indexed by $\textnormal{Out}_v$, and given by
\begin{align*}
\tilde \Lambda(\vec x) ^v_{\vec e, \vec g} = \begin{cases}
		\vec x_{\vec e} \vec x_{\vec g} e^{\frac{i}{2}\angle(-\vec e, \vec g)}
		& \text{if $t_{\vec e} = t_{\vec g}=v$ and $\vec e\neq \vec g$}; \\
		0 & \text{otherwise}.
	\end{cases}
\end{align*}
Let $\rho^v$ be the restriction of $\rho$ to the subspace indexed by $\text{Out}_v$. Note that $J \rho=\rho$, and hence to prove Proposition~\ref{thm:eigenvector}, it is enough to show the following:
\begin{lemma} Let $\vec x$ be as in Proposition~\ref{thm:eigenvector}. Then for all $v \in V$, 
\[
\tilde \Lambda(\vec x)^v  \rho^v = \rho^v.
\]
\end{lemma}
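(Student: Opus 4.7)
The plan is to reduce the lemma to a trigonometric identity that captures the fact that the dual vertices adjacent to $v$ form a closed polygon inscribed in the circle of $\mathbb{G}^*$ centered at $v$. To set up, let $R$ be the radius of that circle, label the edges in $\textnormal{Out}_v$ as $\vec e_1,\dots,\vec e_d$ in counterclockwise order, and write $\alpha_k=\arg(h_{\vec e_k}-v)$ and $\theta_k=\theta_{\vec e_k}$. Since $e_k^*$ is a chord of this circle, perpendicular to $\vec e_k$ and subtending half-central-angle $\theta_k$, its endpoints sit at angular positions $\alpha_k\pm\theta_k$. Consequently $|e_k^*|=2R\sin\theta_k$ and $\rho_{\vec e_k}=\sqrt{2R\sin\theta_k}$. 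The fact that consecutive chords share a dual vertex forces
\[
\alpha_{k+1}-\alpha_k=\theta_k+\theta_{k+1}\quad\text{for }k=1,\dots,d-1,
\]
which together with \eqref{eq:anglecond} makes the edges out of $v$ close up.

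Substituting these expressions into $\tilde\Lambda(\vec x)^v\rho^v=\rho^v$ and using the elementary identity $\sqrt{\tan(\theta/2)\sin\theta}=\sqrt{2}\sin(\theta/2)$ (valid since $\theta_k\in(0,\pi)$), the claim reduces, for each fixed $j$, to
\[
\sum_{k\neq j}\sin(\theta_k/2)\,e^{\frac{i}{2}\angle(-\vec e_j,\vec e_k)}=\cos(\theta_j/2).
\]
After a cyclic relabelling and a rotation of the plane I may take $j=1$ and $\alpha_1=0$, so that $\alpha_k\in(0,2\pi)$ for $k\geq 2$ and $\angle(-\vec e_1,\vec e_k)=\alpha_k-\pi$; in particular $e^{\frac{i}{2}\angle(-\vec e_1,\vec e_k)}=-i\,e^{i\alpha_k/2}$ with no branch ambiguity.

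The key step is a telescoping computation. Rewriting
\[
\sin(\theta_k/2)\,e^{i\alpha_k/2}=\frac{e^{i(\alpha_k+\theta_k)/2}-e^{i(\alpha_k-\theta_k)/2}}{2i},
\]
the identity $\alpha_k+\theta_k=\alpha_{k+1}-\theta_{k+1}$ (valid as real numbers for $k=1,\dots,d-1$, since both sides encode the shared dual vertex) makes the sum over $k=1,\dots,d$ telescope to $\bigl(e^{i(\alpha_d+\theta_d)/2}-e^{i(\alpha_1-\theta_1)/2}\bigr)/(2i)$. Using $\alpha_d+\theta_d=2\pi-\theta_1$, a consequence of $\sum_k\theta_k=\pi$, the boundary term simplifies to $ie^{-i\theta_1/2}$; subtracting off the $k=1$ summand $\sin(\theta_1/2)$ and multiplying by $-i$ gives exactly $\cos(\theta_1/2)$. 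The only delicate point is the sign of $-1=e^{i\pi}$ produced when the half-angles acquire an extra $\pi$ upon closing up around $v$; this is precisely what turns what could naively look like $-\cos(\theta_1/2)$ into the correct $+\cos(\theta_1/2)$ and is really the content of the lemma.
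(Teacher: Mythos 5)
Your proof is correct and follows essentially the same route as the paper's: both reduce the eigenvector equation via $\sqrt{\tan(\theta/2)\sin\theta}=\sqrt{2}\sin(\theta/2)$ to a sum of terms $\sin(\theta_k/2)e^{\frac{i}{2}\angle(-\vec e,\vec e_k)}$, which is then evaluated by writing each sine as a difference of half-angle exponentials and telescoping around the vertex, with the closing factor $e^{i\pi}$ supplying the final sign. The only difference is cosmetic (counterclockwise absolute angles $\alpha_k$ versus the paper's clockwise cumulative turning angles).
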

\begin{proof}
{\color{black}Fix $\vec e \in \text{Out}_v$, and let $\vec e, \vec e_1, \ldots, \vec e_n $ be a clockwise ordering of $\text{Out}_v $.
Then
\begin{align*}
\angle( -\vec e, \vec e_k) &=\pi- \theta_{\vec e}-2\sum_{j=1}^{k-1} \theta_{\vec e_j}- \theta_{\vec e_k}, \qquad \text{for } k=1,\ldots,n.
\end{align*}
Without loss of generality we can assume that the radius of the circle centered at $v$ is equal to $1$, and hence $\rho_{\vec g} = \sqrt{2\sin \theta_{\vec g}}$ for all $\vec g\in \text{Out}_v$. We have
\begin{align*}
\frac{[\tilde \Lambda(\vec x)^v \rho^v]_{\vec e}}{\vec x_{\vec e}} &= \sum_{\vec g \in \text{Out}_v \setminus \vec e} \rho_{\vec g} \vec x_{\vec g} e^{\frac{i}{2}\angle(-\vec e, \vec g)} \\
&=\sqrt2\sum_{\vec g \in \text{Out}_v \setminus \vec e} \sqrt{\sin  \theta_{\vec g}\tan\tfrac{\theta_{\vec g}}2}e^{\frac{i}{2}\angle(-\vec e, \vec g)}  \\
&=2 \sum_{\vec g \in \text{Out}_v \setminus \vec e} \sin  \tfrac{\theta_{\vec g}}2e^{\frac{i}{2}\angle(-\vec e, \vec g)}  \\
&= -i \sum_{k=1}^n \big(e^{\frac{i}2\theta_{\vec e_k}}-e^{ -\frac{i}2\theta_{\vec e_k}}\big)e^{\frac{i}2\big(\pi- \theta_{\vec e}-2\sum_{j=1}^{k-1} \theta_{\vec e_j}- \theta_{\vec e_k}\big)} \\&
= -i (e^{\frac{i}2(\pi- \theta_{\vec e})}-e^{-\frac{i}2(\pi- \theta_{\vec e})})\\
&= 2\cos \tfrac {\theta_{\vec e}}2,
\end{align*} 
where the second last equality follows from a telescopic sum.}
Hence, 
\[[\tilde \Lambda(\vec x)^v \rho^v]_{\vec e} = 2 \cos \tfrac {\theta_{\vec e}}2 \vec x_{\vec e}=2\cos \tfrac {\theta_{\vec e}}2 \sqrt{\tan \tfrac{\theta_{\vec e}}2}=\sqrt{2\sin \theta_{\vec e}} =\rho^v_{\vec e}. \qedhere
\]
\end{proof}

We finish this section with an upper bound on the operator norm of the transition matrix which we will use in the next section to prove exponential decay of the two-point function in the high-temperature ($\beta<1$) regime.
\begin{lemma}  \label{lem:normbound} Let $(\theta_{\vec e})_{\vec e\in \vec E}$ satisfy condition~\eqref{eq:anglecond}, i.e., $\sum_{u\sim v} \theta_{(v,u)} = \pi$ for all $v\in V$, and let $\vec x$ be weights on the directed edges of $\mathbb G$ such that
\[|\vec x_{\vec e}| \leq \sqrt{ s \tan\tfrac{ \theta_{\vec e}}2}\] for some $s>0$ and all $\vec e \in \vec E$.
Then the induced operator norm of $\Lambda_{\mathbb G}(\vec x)$ acting on $\ell^2(\vec E)$ satisfies 
\begin{align*} 
\| \Lambda_{\mathbb G}(\vec x) \| \leq s.
\end{align*}
\end{lemma}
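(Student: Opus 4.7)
The plan is to reduce, via unitary conjugations, to a per-vertex positive-semidefiniteness estimate. Since the edge-reversal involution $J$ is unitary on $\ell^2(\vec E)$, $\|\Lambda_{\mathbb G}(\vec x)\|=\|\tilde\Lambda(\vec x)\|$, and because $\tilde\Lambda(\vec x)$ is block-diagonal with Hermitian blocks $\tilde\Lambda(\vec x)^v$ acting on $\ell^2(\textnormal{Out}_v)$, it suffices to prove $\|\tilde\Lambda(\vec x)^v\|\le s$ at each vertex~$v$.

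Fix $v$ and set $\Psi_{\vec e}=\arg(h_{\vec e}-v)$; the turning-angle identity from the preceding lemma reads $\angle(-\vec e,\vec g)=\Psi_{\vec g}-\Psi_{\vec e}\pm\pi$, with the sign fixed by the $(-\pi,\pi]$ branch. Conjugating $\tilde\Lambda(\vec x)^v$ by the diagonal unitary $\mathrm{diag}(e^{i\Psi_{\vec e}/2})$, and absorbing the signs of the real weights $\vec x_{\vec e}$ into a further $\pm 1$ diagonal unitary, transforms it into $D_\xi(iS)D_\xi$, where $D_\xi=\mathrm{diag}(|\vec x_{\vec e}|)$ and $S$ is the real antisymmetric $\pm 1$ matrix encoding the clockwise cyclic order at $v$ (entry $+1$ when $\vec g$ is clockwise of $\vec e$, $-1$ otherwise, $0$ on the diagonal).

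Writing $|\vec x_{\vec e}|=c_{\vec e}\sqrt{s\tan(\theta_{\vec e}/2)}$ with $c_{\vec e}\in[0,1]$ and factoring $D_\xi=D_cD_\tau$, the submultiplicative estimate $\|D_\xi(iS)D_\xi\|=\|D_c\,D_\tau(iS)D_\tau\,D_c\|\le\|D_c\|^2\|D_\tau(iS)D_\tau\|\le\|D_\tau(iS)D_\tau\|$ reduces the problem to the extremal case $|\vec x_{\vec e}|=\sqrt{s\tan(\theta_{\vec e}/2)}$; a subsequent rescaling by $\sqrt{s}$ brings us to $s=1$. In this case the preceding lemma exhibits $\rho^v$ as an eigenvector of $\tilde\Lambda^v$ with eigenvalue $1$, so the bound $\|\tilde\Lambda^v\|\le 1$ is tight and it remains only to show that no eigenvalue exceeds $1$ in absolute value. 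Equivalently, one must prove
\[
\sum_{\vec e\in\textnormal{Out}_v}|g_{\vec e}|^2\cot\tfrac{\theta_{\vec e}}{2}\ \geq\ |\langle g,(iS)g\rangle|\qquad\text{for all }g\in\mathbb C^{\textnormal{Out}_v}.
\]

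The main obstacle is this quadratic-form inequality. A termwise (Schur-type) bound on $|\tilde\Lambda^v|$ does not suffice: combined with the natural test vector $\rho^v$, the row-sum test reduces to the comparison $\sum_{k\neq j}\sin(\theta_k/2)\leq\sin(\tfrac{1}{2}\sum_{k\neq j}\theta_k)$, which \emph{fails} by subadditivity of $\sin$ on $[0,\pi/2]$. The phase structure carried by $S$ must therefore be used globally. Following \cite{Lis2014a}, I would prove the inequality by a sum-of-squares decomposition driven by the same telescoping identity $e^{i\theta_{\vec e}/2}-e^{-i\theta_{\vec e}/2}=2i\sin(\theta_{\vec e}/2)$ that powered the proof of the eigenvector equation; the condition $\sum_{u\sim v}\theta_{(v,u)}=\pi$ enters essentially, as it is precisely what makes this telescoping close up around~$v$.
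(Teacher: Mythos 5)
Your reduction is sound and follows the same route as the paper: conjugating by the edge-reversal involution $J$ gives the Hermitian block-diagonal matrix $\tilde\Lambda(\vec x)$, so everything comes down to bounding $\|\tilde\Lambda(\vec x)^v\|$ at a single vertex; your phase conjugation to $D_\xi(iS)D_\xi$, the monotonicity reduction to the extremal weights, and the rescaling to $s=1$ are all correct, as is your observation that a termwise/Schur bound fails because $\sin$ is superadditive on the relevant range. You have also correctly isolated the one nontrivial statement, namely the quadratic-form inequality $\sum_{\vec e}|g_{\vec e}|^2\cot\tfrac{\theta_{\vec e}}{2}\ge|\langle g,(iS)g\rangle|$, equivalently the positive semidefiniteness of $\mathrm{Diag}(\cot\tfrac{\theta_{\vec e}}{2})\pm iS$ under the constraint $\sum_{\vec e\in\mathrm{Out}_v}\theta_{\vec e}=\pi$.

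The gap is that this inequality is never proved: you announce that you ``would prove [it] by a sum-of-squares decomposition driven by the same telescoping identity,'' but the identity that yields the eigenvector $\rho^v$ is a \emph{linear} telescoping relation, and it does not by itself produce a certificate for the quadratic form; turning it into one is exactly the hard step, and no decomposition is exhibited. The paper avoids this entirely by quoting the precise statement of Lemma~2.4 of~\cite{Lis2014a}, which computes $\|\tilde\Lambda(\vec x)^v\|$ exactly as the positive root of $\sum_{\vec e\in\mathrm{Out}_v}\arctan(\vec x_{\vec e}^2/s)=\pi/2$; since $t\mapsto\sum\arctan(\vec x_{\vec e}^2/t)$ is decreasing and equals at most $\sum\theta_{\vec e}/2=\pi/2$ at $t=s$ under your hypothesis, the root is at most $s$ and the lemma follows in one line. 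To complete your argument you should either cite that result explicitly, or actually establish $\mathrm{Diag}(\cot\tfrac{\theta_{\vec e}}{2})\pm iS\succeq 0$ --- for instance via Sylvester's criterion and the determinant recursion $\det_k=\cos(\alpha_1+\cdots+\alpha_k)/\prod_{j\le k}\sin\alpha_j$ with $\alpha_j=\theta_{\vec e_j}/2$, whose leading minors are positive because the partial sums stay below $\pi/2$ and whose full determinant vanishes exactly when $\sum_j\alpha_j=\pi/2$. As written, the central estimate is asserted rather than proved.
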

\begin{proof}
Note that $\| \tilde \Lambda(\vec x) \|= \|\Lambda_{\mathbb G}(\vec x)\|$. Since $ \tilde \Lambda(\vec x) $ is block diagonal with blocks $\tilde \Lambda(\vec x)^v$, the desired inequality
is a direct consequence of condition \eqref{eq:anglecond} and Lemma~2.4 of~\cite{Lis2014a} which says that
the operator norm of $\tilde \Lambda(\vec x)^v$ acting on the $\deg(v)$-dimensional Euclidean complex vector space indexed by $\text{Out}_v$ 
is the positive solution $s$ of
\begin{equation} \label{eq:lisnorm}
\sum_{\vec e \in \textnormal{Out}_v } \arctan\Big(\frac {\vec x_{\vec e}^2}{s} \Big) =\frac{\pi}2. \qedhere
\end{equation}
\end{proof}

\section{Proofs of main results} \label{sec:proof}
{\color{black}
In this section we prove the main theorems. In preparation for the proofs,
we bound the Ising two-point correlation functions with free boundary conditions from both below and above by the entries of the inverse Kac--Ward matrix (Lemma~\ref{lem:fermionbound}). The upper bound together with the previous
bounds on the operator norm of the transition matrix will be used to obtain exponential decay of correlations for $\beta<1$.
We then recall a general differential inequality due to Duminil-Copin and Tassion (Lemma~\ref{lem:DumCopTas}) which gives a lower bound for the magnetization in terms of a correlation function $\varphi$ defined in~\eqref{eq:phi}. We use the existence of the eigenvector 
from Proposition~\ref{thm:eigenvector} together with the lower bound from Lemma~\ref{lem:fermionbound} to obtain a non-zero lower bound on $\varphi$ at $\beta=1$ (Lemma~\ref{lem:phi}).
Finally, we can integrate the differential inequality to obtain positive magnetization for all $\beta>1$.

We first need to state the classical \emph{high-temperature expansion} of the Ising correlation functions. To this end, for $u,v\in V_G$, let $\mathcal{E}_{G}(u,v)$ be the collection of subsets $\omega$ of $E_G$ such that each vertex in $V_G \setminus(\{u\} \Delta \{v\})$ (resp.\ in $\{u\} \Delta \{v\}$) 
has even (resp.\ odd) degree in $\omega$, where $\Delta$ is the symmetric difference. 
We define the partition function
\[
Z_G(x)_{u,v} = \sum_{\omega \in \mathcal{E}_{G}(u,v)} \prod_{e\in \omega} x_e.
\]
As observed by van der Waerden~\cite{vdW}, for $x_{e}= \tanh(\beta J_e)$, and all $u,v \in V_G$, we have
\begin{align} \label{eq:hte}
\langle \sigma_u \sigma_v \rangle_{G,\beta}^{f} = \frac{Z_{G}( x)_{u,v}}{Z_{G}( x)}.
\end{align}}

{\color{black}Using the relationship between the fermionic observable and the inverse Kac--Ward matrix~\eqref{eq:lis}, we can now prove two-sided bounds on the two-point correlation functions 
in terms of the entries of inverse Kac--Ward matrices with possible signed weights. To this end, given a set of signs $ \tau \in \{- 1, 1 \}^{\vec E}$ on the directed edges, we define the signed weight vector $\vec x^{\tau}$ on $\vec E$ simply by
$\vec x^{\tau}_{\vec e}= \tau_{\vec e} \vec x_{\vec e}$.
\begin{lemma} \label{lem:fermionbound} Let $\beta>0$, $J_e> 0$ and $x_{e}= \tanh(\beta J_e)$, $e\in E$. Then for all $u,v \in V_G$, $u\neq v$,
there exists $ \tau \in \{- 1, 1 \}^{\vec E}$ such that for all weight vectors $\vec x > 0$ on $\vec E$ related to $x$ by \eqref{eq:factorization}, we have
\[
 \mathop{\max_{\vec e \in \textnormal{In}_u,  \vec g \in \textnormal{Out}_v }}_{\vec e \neq \vec g}  
  \frac{ | T^{-1}_G(\vec x)_{\vec e, \vec g}|}{\vec x_{\vec e}\vec x_{-\vec g}} \leq 
\langle \sigma_u \sigma_v \rangle_{G,\beta}^f \leq \sum_{\vec e \in \textnormal{In}_u, \vec g \in \textnormal{Out}_v } \vec x_{-\vec e}\vec x_{\vec g}
{ | T^{-1}_G( \vec x^{\tau})_{\vec e, \vec g}|}.
\]
\end{lemma}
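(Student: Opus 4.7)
My approach splits into the two inequalities. \textbf{Lower bound.} Invoking $T_G^{-1}(\vec x) = \overline{F_G(\vec x)}$ from \eqref{eq:lis} and reading off \eqref{eq:fo}, for $\vec e \in \textnormal{In}_u$ and $\vec g \in \textnormal{Out}_v$ with $\vec e \neq \vec g$ (the condition $\vec g \neq -\vec e$ is automatic since $u \neq v$), the bijection $\omega \mapsto \omega \setminus \{\text{half-edges}\}$ identifies $\mathcal{E}_G(\vec e, \vec g)$ with $\mathcal{E}_{G \setminus \{e,g\}}(u,v)$, which is a subset of $\mathcal{E}_G(u,v)$. The triangle inequality together with van der Waerden's identity \eqref{eq:hte} then gives
\[
|T_G^{-1}(\vec x)_{\vec e, \vec g}| = |F_G(\vec x)_{\vec e, \vec g}| \leq \frac{\vec x_{\vec e}\vec x_{-\vec g}}{Z_G(x)}\, Z_G(x)_{u,v} = \vec x_{\vec e}\vec x_{-\vec g}\,\langle\sigma_u\sigma_v\rangle^f_{G,\beta},
\]
and dividing by the positive factor $\vec x_{\vec e}\vec x_{-\vec g}$ and maximising over admissible pairs yields the lower bound.

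\textbf{Upper bound.} Here my plan is to produce, for the fixed pair $(u,v)$, a sign vector $\tau \in \{-1, +1\}^{\vec E}$ (depending only on $(u,v)$, not on $\vec x$) together with an identity of the form
\[
\langle \sigma_u\sigma_v\rangle^f_{G,\beta} = \Bigl|\sum_{\vec e \in \textnormal{In}_u,\,\vec g \in \textnormal{Out}_v} \vec x_{-\vec e}\vec x_{\vec g}\, T_G^{-1}(\vec x^\tau)_{\vec e, \vec g}\Bigr|,
\]
after which the triangle inequality delivers the claim. To engineer such an identity, I would partition the van der Waerden sum over $\mathcal{E}_G(u,v)$ by extracting from each configuration $\omega$ its leftmost self-avoiding path $\gamma_\omega$ from $v$ to $u$, and indexing it by the pair $(\vec e, \vec g)$ consisting of the last and first directed edges of $\gamma_\omega$; each contribution should then be matched to a term of $F_G(\vec x^\tau)_{\vec e, \vec g}$ from the path expansion \eqref{eq:fo}. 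The role of $\tau$ is to neutralise the complex phase factor $e^{-\tfrac{i}{2}\alpha(\gamma_\omega)}$ in \eqref{eq:fo}: for self-avoiding paths from $v$ to $u$ with prescribed initial and terminal tangent directions $\vec g$ and $\vec e$, the total turning angles differ by integer multiples of $2\pi$, so the phases agree up to a $\pm 1$ determined by the winding of the symmetric difference of two such paths. These signs can be absorbed into an edge-by-edge choice of $\tau$, in the spirit of the spin-structure signs underlying the Kac--Ward identity \eqref{eq:KacWard}.

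The main obstacle is the construction and verification of the sign vector $\tau$: one needs a single global choice that simultaneously aligns the phases of \emph{all} self-avoiding paths contributing to the sum, and one must check that the modified partition function $Z_G(x^\tau)$ appearing in the denominator of $F_G(\vec x^\tau)$ is compatible with the desired identity. This is a graph-theoretic statement closely linked to the combinatorial mechanism making \eqref{eq:KacWard} correct; I expect the argument to follow the signed-combinatorics techniques of \cite{Cimasoni2015, KLM} adapted to accommodate the two distinguished edges at $u$ and $v$.
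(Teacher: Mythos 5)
Your lower bound is correct and is essentially the paper's own argument: deleting the two half-edges injects $\mathcal{E}_G(\vec e,\vec g)$ into $\mathcal{E}_G(u,v)$ with the weights $\vec x_{\vec e}\vec x_{-\vec g}$ factored out, all terms of $Z_G(x)_{u,v}$ are positive while those of $F_G(\vec x)_{\vec e,\vec g}$ carry unimodular phases, and \eqref{eq:hte} finishes it. Your observation that $\vec g\neq -\vec e$ is automatic for $u\neq v$ is also right.

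The upper bound, however, is a plan rather than a proof, and the step you yourself flag as ``the main obstacle'' --- constructing a single $\tau$ that simultaneously aligns the phases $e^{-\frac{i}{2}\alpha(\gamma_\omega)}$ of \emph{all} contributing paths, and reconciling the denominator $Z_G(x^\tau)$ of $F_G(\vec x^\tau)$ with the denominator $Z_G(x)$ of the van der Waerden expansion --- is precisely the content of the inequality; deferring it to ``signed-combinatorics techniques'' leaves a genuine gap. The paper sidesteps the path-by-path phase analysis entirely. It augments $G$ by a defect edge $\gamma=\{u,v\}$ (more generally a simple path avoiding vertices of $G$), takes $\tau$ with $\tau_{\vec e}\tau_{-\vec e}=-1$ exactly on edges crossed by $\gamma$, so that the signs of $x^\tau$ cancel the crossing signs $(-1)^{C(\omega)}$ in $Z_{G^\gamma}(x^\tau)$; then $\langle\sigma_u\sigma_v\rangle^f_{G,\beta}=\frac{1}{Z_G(x)}\partial_{\vec x_{\vec\gamma}}Z_{G^\gamma}(x^\tau)\big|_{\vec x_{\vec\gamma}=0,\,\vec x_{-\vec\gamma}=1}$, and differentiating $\sqrt{\det T_{G^\gamma}(\vec x^\tau)}$ via Jacobi's formula produces exactly the sum $\frac{Z_G(x^\tau)}{Z_G(x)}\sum_{\vec e,\vec g}e^{\frac{i}{2}(\angle(\vec\gamma,\vec e)+\angle(\vec g,\vec\gamma))}\vec x_{-\vec e}\vec x_{\vec g}\,T^{-1}_G(\vec x^\tau)_{\vec e,\vec g}$, from which the bound follows by the triangle inequality and $|Z_G(x^\tau)|\le Z_G(x)$ (a signed subsum of positive terms). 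Note also that your target identity $\langle\sigma_u\sigma_v\rangle^f_{G,\beta}=\bigl|\sum_{\vec e,\vec g}\vec x_{-\vec e}\vec x_{\vec g}\,T^{-1}_G(\vec x^\tau)_{\vec e,\vec g}\bigr|$ is not quite what holds: the correct identity carries per-term phases and the prefactor $Z_G(x^\tau)/Z_G(x)$, which happens not to matter for the inequality but would matter if you tried to prove the identity as stated. If you want to pursue your combinatorial route, the crossing-with-a-reference-path mechanism is exactly the device that makes the turning-angle signs globally consistent (via Whitney's formula applied to the closed curve $\gamma_\omega\cup\gamma$), but you would still need to carry out that bookkeeping explicitly.
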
}

{\color{black}Note that the lower bound is given by the inverse Kac--Ward matrix with no additional signs, and the upper bound requires signed weights.}
\begin{proof}
We first prove the lower bound.
By~\eqref{eq:lis}, we have that $T_G^{-1}(\vec x)=\overline{F_G(\vec x)}$. By the definition of $\mathcal{E}_G(\vec e, \vec g)$, removing the two half-edges from $\omega\in \mathcal{E}_G(\vec e, \vec g)$ (carrying weights $\vec x_{\vec e}$ and $\vec x_{-\vec g}$)
yields a configuration in $\mathcal{E}_{G}(u,v)$. It is now enough to use the high temperature expansion~\eqref{eq:hte} and the fact that all terms in the definition of $Z_G(x)_{u,v} $ are positive, whereas 
the corresponding terms in $\overline{F_G(\vec x)}_{\vec e, \vec g}$ carry a complex factor.

To obtain the upper bound, we construct an augmented graph $G^{\gamma}$ by adding to $G$ a simple path $\gamma$ connecting $u$ and $v$
in such a way that $\gamma$ crosses each edge at most once and does not pass through any vertex of $G$.
Without loss of generality, we can assume that $\gamma$ is the single edge $\{u,v\}$.
We fix $\tau \in  \{- 1,1\}^{\vec E}$ satisfying
\[
\tau_{\vec e} \tau_{-\vec e} = \begin{cases}
-1 & \text{ if $e$ is crossed by $\gamma$,} \\
+1 & \text{ otherwise,}
\end{cases}
\]
and, when needed, we extend the weights to $G^{\gamma}$ by setting
$\tau_{\vec \gamma} =\tau_{-\vec \gamma}=1$. We also chose an orientation $\vec \gamma = (u,v)$ of $\gamma$.

{\color{black}We claim that for all $\vec x^{\tau}$ related to $x^{\tau}$ by \eqref{eq:factorization}, we have
\begin{align} \label{eq:fb1}
\langle \sigma_u \sigma_v \rangle_{G,\beta}^f  &= \frac{1}{Z_G(x)} \frac{\partial }{\partial \vec x_{\vec \gamma} } Z_{G^{\gamma}} ( x^{\tau}) \Big |_{\vec x_{\vec \gamma}=0,\vec x_{-\vec \gamma}=1}.
\end{align} 
To understand this equality first note that the differentiation selects only the even subgraphs of $G^{\gamma}$ that actually contain $\gamma$. 
Furthermore, there is a bijection between such subgraphs and the graphs in $\mathcal{E}_{G}(u,v)$ (it is enough to remove~$\gamma$).
Hence, \eqref{eq:fb1} follows from the high-temperature expansion~\eqref{eq:hte} and the fact that the signs of $x^{\tau}$ are chosen in such a way that they cancel out the signs appearing in the definition \eqref{eq:KW} of $Z_{G^{\gamma}} ( x^{\tau})$.

Using the Kac--Ward formula~\eqref{eq:KacWard} for $G^{\gamma}$ and Jacobi's formula for the derivative of a determinant, we get that the right-hand side of \eqref{eq:fb1} is equal to
\begin{align*}
\frac{1}{Z_G(x)} &\frac{\partial }{\partial \vec x_{\vec \gamma} } \sqrt{{\det} T_G(\vec x)} \Big |_{\vec x_{\vec \gamma}=0,\vec x_{-\vec \gamma}=1} \\ &=
\frac{1}{Z_G(x)} \frac 12 \sqrt{{\det}T_{G^{\gamma}}(\vec x^{\tau})} \text{Tr} \Big (T^{-1}_{G^{\gamma}}(\vec x^{\tau})  \frac{\partial }{\partial \vec x_{\vec \gamma}}  T_{G^{\gamma}}(\vec x^{\tau})\Big)\Big |_{\vec x_{\vec \gamma}=0,\vec x_{-\vec \gamma}=1} \\
&=\frac{Z_{G}(x^{\tau})}{Z_G(x)}  \frac 12\text{Tr} \Big (T^{-1}_{G^{\gamma}}(\vec x^{\tau})  \frac{\partial }{\partial \vec x_{\vec \gamma} }  T_{G^{\gamma}}(\vec x^{\tau})\Big)\Big |_{\vec x_{\vec \gamma}=0,\vec x_{-\vec \gamma}=1} \\
 &=\frac{Z_{G}(x^{\tau})}{Z_G(x)} \sum_{\vec e \in \textnormal{Out}_v, \vec g \in \textnormal{In}_u }
 e^{\tfrac{i}2 (\angle(\vec\gamma, \vec e)+\angle(\vec g,\vec \gamma))} \vec x_{-\vec e}\vec x_{\vec g}
{  T^{-1}_G( \vec x^{\tau})_{\vec e, \vec g}} \\
 &\leq \sum_{\vec e \in \textnormal{Out}_v , \vec g \in \textnormal{In}_u}   \vec x_{-\vec e}\vec x_{\vec g}
{ | T^{-1}_G( \vec x^{\tau})_{\vec e, \vec g}|}.
\end{align*}

The second equality follows from the Kac--Ward formula again and the fact that $x_{\gamma}=\vec x_{\vec \gamma}\vec x_{-\vec \gamma}=0$. The third equality follows from the definition of $T_G( \vec x^{\tau})$, and the last inequality holds true since 
$Z_{G^{\gamma}}(x^{\tau})$ counts even subgraphs with additional signs, whereas all terms in $Z_G(x)$ are positive.}
\end{proof}

{\color{black}
The second crucial ingredient in the proof of our Theorem~\ref{thm:main} will be a differential inequality for the magnetization of general Ising models due to Duminil-Copin and Tassion. To state it, we need to introduce some notation. To this end,
let $S\subset V$ be such that the subgraph of $\mathbb{G}$
induced by $S$ is connected. 
We will denote by $\langle \cdot  \rangle^{\Box}_{S,\beta}$ the expectation with respect to the Ising model on this induced subgraph.
For $v\in V$, we define
\begin{align} \label{eq:phi}
\varphi_{S,\beta} (v) = \sum_{u\in S} \sum_{w \notin S}\tanh \big( \beta J_{\{u,w\}}\big) \langle \sigma_v \sigma_u  \rangle^f_{S,\beta}.
\end{align}
The next lemma is contained in an unpublished version~\cite{DumCopTasU} of~\cite{DumCopTas}, and we give its proof for completeness in Appendix~\ref{app:a}. }
\begin{lemma} \label{lem:DumCopTas}
For any finite subgraph $G=(V_G,E_G)$, any $v\in V_G$, and $\beta >0$, 
\[
\frac{d}{d \beta} \langle \sigma_v \rangle_{G,\beta}^+  \geq \tfrac 1 {\beta} \mathop{\inf_{S \ni v}}_{S\subset V_G} \varphi_{S,\beta}(v)\big(1- ( \langle \sigma_v \rangle_{G,\beta}^+ )^2  \big).
\]
\end{lemma}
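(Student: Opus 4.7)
The plan is to express $\tfrac{d}{d\beta}\langle\sigma_v\rangle^+_{G,\beta}$ as a sum of truncated correlations over the interaction terms of the $+$ Hamiltonian, retain only the terms that cross the boundary of $S$, and bound each such contribution from below using the random current representation of the Ising model.

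First I would differentiate the Gibbs measure directly, obtaining
\[
\frac{d}{d\beta}\langle\sigma_v\rangle^+_{G,\beta} \;=\; \sum_{e} J_e \operatorname{Cov}^+_{G,\beta}(\sigma_v,\sigma_e),
\]
where $e$ runs over all interaction terms in the $+$ Hamiltonian: for interior edges $\{x,y\}\in E_G$ we set $\sigma_e = \sigma_x\sigma_y$, while for external-boundary terms $\{u,v'\}$ with $u\notin V_G$ and $v'\in\partial V_G$ we set $\sigma_e=\sigma_{v'}$. By the GHS inequality every summand is non-negative, so for any $S\ni v$ in $V_G$ I may restrict the sum to terms $\{u,w\}$ with $u\in S$ and $w\notin S$ (possibly $w\in V\setminus V_G$) and still obtain a valid lower bound on the derivative.

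Next, for each such crossing term I would prove
\[
J_{\{u,w\}}\operatorname{Cov}^+_{G,\beta}(\sigma_v,\sigma_u\sigma_w)\;\geq\;\tfrac{1}{\beta}\tanh(\beta J_{\{u,w\}})\,\langle\sigma_v\sigma_u\rangle^f_{S,\beta}\,\bigl(1-(\langle\sigma_v\rangle^+_{G,\beta})^2\bigr),
\]
with the convention $\sigma_w\equiv 1$ for $w\notin V_G$. The inequality $J_{\{u,w\}}\geq \tanh(\beta J_{\{u,w\}})/\beta$ follows from $\tanh x\leq x$. Introducing a ghost vertex $g$ to encode the $+$ boundary condition, the factor $1-(\langle\sigma_v\rangle^+)^2$ is, after the Aizenman switching lemma, the probability that $v$ is not connected to $g$ in a doubled sourceless random current on the extended graph. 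The factor $\langle\sigma_v\sigma_u\rangle^f_{S,\beta}$ is obtained by requiring the connection $v\leftrightarrow u$ in the sourceful current to stay inside $S$ and invoking GKS monotonicity, which allows passage from the $+$ measure on $G$ to the free measure on the subgraph induced by $S$. Summing these lower bounds over crossing terms reveals exactly the definition \eqref{eq:phi} of $\varphi_{S,\beta}(v)$, and taking the infimum over $S\ni v$ in $V_G$ yields the claim.

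The main obstacle is the random-current argument producing the second display. One has to couple the currents with sources $\{v,g\}$ and $\emptyset$, decompose according to whether the bond $\{u,w\}$ is open in the source-free current, and carefully separate the event that $v$ connects to $u$ inside $S$ from the event that $v$ connects to $g$ outside $S$, so that the switching lemma disentangles the three factors (the $\tanh$ weight, the internal correlation, and the non-connection probability $1-(\langle\sigma_v\rangle^+)^2$) cleanly. The finite-energy-style extraction of $\tanh(\beta J_{\{u,w\}})$ and the use of GKS to pass from a $+$ correlation on $G$ to a free correlation on $S$ are the subtle combinatorial points.
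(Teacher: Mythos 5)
There is a genuine gap, and it is located exactly at the step where you fix a deterministic set $S\ni v$, discard all interaction terms except those crossing $\partial S$, and then claim the per-edge bound
\[
J_{\{u,w\}}\operatorname{Cov}^+_{G,\beta}(\sigma_v,\sigma_u\sigma_w)\;\geq\;\tfrac{1}{\beta}\tanh(\beta J_{\{u,w\}})\,\langle\sigma_v\sigma_u\rangle^f_{S,\beta}\,\bigl(1-(\langle\sigma_v\rangle^+_{G,\beta})^2\bigr).
\]
This inequality is false in general. Take $\beta$ small and let $S$ be a small ball around $v$ deep inside a large graph $G$. Writing the $+$ boundary condition with a ghost vertex $\bullet$, the covariance equals $\langle\sigma_v\sigma_\bullet\sigma_u\sigma_w\rangle^f_{G_\bullet}-\langle\sigma_v\sigma_\bullet\rangle^f_{G_\bullet}\langle\sigma_u\sigma_w\rangle^f_{G_\bullet}$; by the switching lemma every contributing current configuration must connect one of $u,w$ to $\bullet$, so the covariance is exponentially small in the distance from $\{u,w\}$ to $\partial V_G$. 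The right-hand side, however, is of order $\langle\sigma_v\sigma_u\rangle^f_{S,\beta}$, which only sees the short path from $v$ to $u$ inside $S$ and is not small. A second warning sign: if your per-edge bound held for every fixed $S$, summing over crossing edges would give the conclusion with $\sup_{S\ni v}\varphi_{S,\beta}(v)$ in place of $\inf_{S\ni v}\varphi_{S,\beta}(v)$, a much stronger (and false) statement; the infimum in the lemma is telling you that $S$ cannot be chosen in advance.

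The actual proof keeps the sum over \emph{all} interaction terms and lets $S$ emerge as a \emph{random} set: after one application of the switching lemma the derivative is a sum over pairs of currents with the indicator $\mathbf 1[v\leftrightarrow u,\ w\leftrightarrow\bullet,\ v\not\leftrightarrow\bullet]$, and one conditions on $\mathcal S_\bullet$, the set of vertices \emph{not} connected to the ghost in $\mathbf n=\mathbf n_1+\mathbf n_2$. On the event $\{\mathcal S_\bullet=S\}$ the currents vanish on edges leaving $S$, hence factorize into independent currents on $S$ and on its complement; this is what lets the switching lemma disentangle $\langle\sigma_v\sigma_u\rangle^f_{S,\beta}$ (computed on the induced subgraph) from the non-connection event, and the surviving edges $\{u,w\}$ are exactly those crossing the boundary of the random $S$, producing $\varphi_{S,\beta}(v)$. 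Only at that point can one bound $\varphi_{S,\beta}(v)\geq\inf_{S\ni v}\varphi_{S,\beta}(v)$ and resum the probabilities of the disjoint events $\{\mathcal S_\bullet=S\}$ into $1-(\langle\sigma_v\rangle^+_{G,\beta})^2$. (Two smaller points: the non-negativity of the covariances is Griffiths' second inequality, not GHS; and the factor $\tanh(\beta J)/\beta\le J$ enters exactly as you say, which is fine.) Your outline correctly identifies the ingredients — ghost vertex, switching lemma, $\tanh x\le x$ — but the order of operations (fix $S$ first, then decouple) cannot be repaired without reverting to the random-cluster conditioning.
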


{\color{black}
We will next use Lemma~\ref{lem:fermionbound} together with the existence of the eigenvector from Proposition~\ref{thm:eigenvector} to get a uniform lower bound on $\varphi_{S,\beta}(v)$.
This may be considered as the main technical novelty of the present article, and its consequences for the existence of positive magnetization at low temperatures for other families of planar graph are discussed in Sec.~\ref{sec:other}.
}

Before stating the result, we need to introduce additional notation. 
Let $G=(V_{G},E_{G})$ be a finite subgraph of $\mathbb{G}$ induced by the vertex set $V_{G}$. Consider the edges of $\mathbb{G}$ whose
one endpoint is in $V_{G}$ and the other one in $V \setminus V_{ G}$. Each such edge splits into two half-edges, one of which is incident to $V_{G}$.
We add the incident half-edges to the edge set $E_{G}$, and we assign them the weights of the corresponding full edges in $G$. We also add their endpoints (which are the midpoints 
of the corresponding edges of $\mathbb{G}$) to the vertex set~$V_{G}$ and we call the resulting graph $\bar{G}=({V}_{\bar G},{E}_{\bar G})$.
Note that by construction, all vertices in~$V_{G}$ are \emph{interior} in $\bar{G}$ meaning that they have the same degree in $\bar{G}$ as in $\mathbb{G}$.
Let $\vec{n}(\partial G)$ be the set of the directed versions of the half-edges in $\bar{E}_{G}$
which point outside~$G$. 

\begin{lemma} \label{lem:phi}
Let $\mathbb G^*$ be a circle pattern satisfying the bounded angle property~\eqref{eq:regular}, and such that the radii of all circles are uniformly bounded from above by $R<\infty$. 
Then for every $v\in V$ and every finite $S\subset V$ containing $v$,
\[
\varphi_{S,1} (v)  \geq \sqrt{\frac{ r }{R}} \tan \tfrac{\varepsilon}2,
\]
where $r$ is the maximal radius of a circle centered at a neihbor of $v$, and $\varepsilon$ is as in \eqref{eq:regular}.
\end{lemma}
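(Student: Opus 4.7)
\noindent\emph{Proof plan.} I would approach this by expressing $\rho$ at a chosen in-edge of $v$ as a sum over boundary edges, each weighted by an entry of an inverse Kac--Ward matrix, and then use the bound $|T^{-1}(\vec e,\vec g)|\le\vec x_{\vec e}\vec x_{-\vec g}\langle\sigma_{h_{\vec e}}\sigma_{t_{\vec g}}\rangle^f$ from Lemma~\ref{lem:fermionbound} so that the correlations appearing in $\varphi_{S,1}(v)$ surface on the right-hand side. The first step is to realise $\rho$ via a boundary-source equation on the augmented graph $\bar G$ defined just before the statement. Extending $\rho$ to $\vec E_{\bar G}$ by $\tilde\rho_{(u,m)}=\tilde\rho_{(m,u)}=\rho_{(u,w)}$ on each half-edge associated to a boundary edge $(u,w)$ and applying Proposition~\ref{thm:eigenvector} vertex by vertex shows that $T_{\bar G}(\vec x)\tilde\rho$ vanishes on every directed edge of $\bar G$ except on the outgoing half-edges $(u,m)$, where its value is $\tilde\rho_{(u,m)}$; the reason is that the row of $\Lambda_{\bar G}$ at $(u,m)$ is identically zero because the midpoint $m$ has only one out-neighbour in $\bar G$. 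Equivalently $\tilde\rho=T_{\bar G}^{-1}\mu$ for a source $\mu$ supported on outgoing half-edges, so that
\[
\rho_{\vec e_0}=\sum_{(u,w)\in\vec n(\partial G)} T_{\bar G}^{-1}(\vec x)_{\vec e_0,(u,m)}\,\rho_{(u,w)} \qquad \text{for every } \vec e_0\in\vec E_G.
\]

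Next I would specialise to $\vec e_0=(u^*,v)\in\textnormal{In}_v$ with $u^*\sim v$ attaining the maximal neighbour radius $r$, take absolute values in the display above, and apply Lemma~\ref{lem:fermionbound} on $\bar G$. Because each midpoint spin is coupled to only one vertex of $\bar G$ it decouples from the Ising measure, giving $\langle\sigma_v\sigma_u\rangle^f_{\bar G,1}=\langle\sigma_v\sigma_u\rangle^f_{S,1}$ and therefore $|T_{\bar G}^{-1}(\vec e_0,(u,m))|\le\vec x_{\vec e_0}\vec x_{(w,u)}\langle\sigma_v\sigma_u\rangle^f_{S,1}$. An elementary circle-pattern computation using $|e^*|=2r_u\sin\theta_{(u,w)}$, $\vec x_{(u,w)}^2=\tan(\theta_{(u,w)}/2)$, and $\sin\theta=2\sin(\theta/2)\cos(\theta/2)$ rewrites $\vec x_{(w,u)}\rho_{(u,w)}$ as $2\sqrt{r_u}\cos(\theta_{(u,w)}/2)\,x_{\{u,w\}}\le 2\sqrt R\,x_{\{u,w\}}$, collapsing the sum into $|\rho_{\vec e_0}|\le 2\sqrt R\,\vec x_{\vec e_0}\,\varphi_{S,1}(v)$.

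Applying the same identity to $\vec e_0$ itself yields $\rho_{\vec e_0}/\vec x_{\vec e_0}=2\sqrt r\cos(\theta_{(u^*,v)}/2)$, hence $\varphi_{S,1}(v)\ge\sqrt{r/R}\cos(\theta_{(u^*,v)}/2)$; the bounded angle property $\theta_{(u^*,v)}\le\pi/2-\varepsilon$ and the elementary inequality $\cos(\pi/4-\varepsilon/2)\ge\tan(\varepsilon/2)$ for $\varepsilon\in(0,\pi/4]$ then give the stated bound. The main conceptual obstacle is the first step: one has to recognise that extending $\rho$ to $\bar G$ in the natural way converts Proposition~\ref{thm:eigenvector} into the clean identity $T_{\bar G}\tilde\rho=\mu$ localised on outgoing half-edges. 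Once this is in place, Lemma~\ref{lem:fermionbound} converts matrix entries into genuine Ising correlations on $S$, and the remainder is trigonometry specific to circle patterns.
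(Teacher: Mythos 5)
Your argument is essentially the paper's own proof. The paper likewise restricts the eigenvector $\rho$ of Proposition~\ref{thm:eigenvector} to $\bar G$, observes that $\zeta=T_{\bar G}(\vec x)\rho$ is supported on $\vec{n}(\partial G)$ with $\zeta_{\vec g}=\rho_{\vec g}$ there (for exactly the reason you give: the row of $\Lambda_{\bar G}$ at an outgoing half-edge vanishes, while at every edge pointing into an interior vertex the star relation of Proposition~\ref{thm:eigenvector} applies), writes $\rho_{\vec e}=\sum_{\vec g\in\vec{n}(\partial G)}T^{-1}_{\bar G}(\vec x)_{\vec e,\vec g}\,\rho_{\vec g}$ for $\vec e\in\textnormal{In}_v$, and bounds the entries by $\vec x_{\vec e}\vec x_{-\vec g}\langle\sigma_v\sigma_w\rangle^f_{G,1}$ via the lower bound of Lemma~\ref{lem:fermionbound}; your trigonometric bookkeeping is organized slightly differently but gives the same constant. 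The one point you omit is the case $d(v,V\setminus S)=1$, which the paper dispatches separately in one sentence. There your argument as written does not quite apply: the maximizing neighbour $u^*$ may lie outside $S$, so $(u^*,v)\notin\vec E_G$, and even if $u^*\in S$ the boundary terms with $u=v$ are not covered by Lemma~\ref{lem:fermionbound}, whose lower bound is stated only for distinct vertices. The repair is immediate — if $v$ has a neighbour $w\notin S$, then directly from \eqref{eq:phi} one has $\varphi_{S,1}(v)\geq\tanh J_{\{v,w\}}=\sqrt{\tan\tfrac{\theta_{(v,w)}}2\tan\tfrac{\theta_{(w,v)}}2}\geq\tan\tfrac{\varepsilon}2\geq\sqrt{r/R}\,\tan\tfrac{\varepsilon}2$ — but it should be stated so that the main computation is only invoked when all neighbours of $v$ lie in $S$.
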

\begin{proof}
For $d(v,V\setminus S)= 1$, the bound easily follows from the definition of $\varphi_{S,1} (v)$ and the bounded angle property.
We can therefore assume that $d(v,V\setminus S)> 1$.

Let $G$ be the subgraph of $\mathbb G$ induced by the vertices in $S$. Let $\vec x_{\vec e} =\sqrt{\tan \tfrac{\theta_{\vec e}}2} $ and let $\rho$ be the eigenvector of $\Lambda_{\mathbb G}(\vec x)$ as defined in Proposition~\ref{thm:eigenvector}
restricted to the directed edges of $\bar G$ (here the half-edges are identified with their counterparts in $\mathbb{G}$).
Let $\zeta=T_{\bar G}(\vec x)\rho$, and note that, by the definition of the Kac--Ward matrix, $ \zeta_{\vec e }= \rho_{\vec e}$ for all $\vec e \in \vec{n}(\partial G)$, and $\zeta_{\vec e} = 0$ otherwise.
Therefore, by Lemma~\ref{lem:fermionbound} and the bounded angle property, for any $\vec e =(u,v) \in \text{In}_v$, 
\begin{align*}
\sqrt{2r_u\sin\varepsilon}&\leq \sqrt{2r_u\sin \theta_{\vec e}}\\
 &=   \rho_{\vec e} \\ &= \big[T^{-1}_{\bar G}(\vec x) \zeta\big]_{\vec e} \\& =  \sum_{\vec g \in\vec{n}(\partial G)} T^{-1}_{\bar G}(\vec x)_{\vec e,\vec g}\rho_{\vec g} \\&
  \leq   \sqrt{2R \cos {\varepsilon}}\sum_{\vec g = (w,z) \in\vec{n}(\partial G) }\vec x_{\vec e}\vec x_{-\vec g} \langle  \sigma_v \sigma_w \rangle^{\textnormal{f}}_{G,1} \\&
  \leq  \sqrt{2R \cos {\varepsilon}}\sum_{\vec g = (w,z) \in\vec{n}(\partial G)} \vec x_{-\vec g} \langle  \sigma_v \sigma_w \rangle^{\textnormal{f}}_{G,1} \\
   & \leq  \sqrt{2R \frac{ \cos {\varepsilon}}{ \tan \tfrac{\varepsilon}2}}\varphi_{S,1} (v),
   \end{align*}
 where $r_u$ is the radius of the circle centered at $u$, and
where in the last inequality we used that $\tanh J_{g} = \vec x_{g} =\vec x_{\vec g} \vec x_{-\vec g} \geq \sqrt{ \tan \tfrac{\varepsilon}2} \vec x_{-\vec g}$ by the bounded angle property.
We finish the proof by maximizing over the neighbors $u$ of $v$.
\end{proof}

The last technical statement that we need is an easy consequence of Lemma~\ref{lem:normbound} that will be used to derive exponential decay of correlations for $\beta<1$.
\begin{lemma} \label{lem:normbound1}
Let $\mathbb G^*$ be a circle pattern satisfying the bounded angle property~\eqref{eq:regular}, and
let $\beta>0$. For $e\in E$, let $x_{e}(\beta)= \tanh(\beta J_e)$ where $J_e$ are the coupling constants defined in \eqref{eq:cc}.
For $\vec e\in \vec E$, let 
\[
 \vec x_{\vec e}(\beta) =\sqrt{ \tfrac{x_e(\beta)}{x_e(1)} \tan\tfrac{\theta_{\vec e}}2},
\]
where $e$ is the undirected version of~$\vec e$.
Then there exists a constant $c>0$ such that for all $\tau \in \{\pm 1\}^{\vec E}$ and $\beta \in [0,1]$,
\[
\|\Lambda_{\mathbb G} ( \vec x^{\tau}(\beta))\| \leq 1- c(1-\beta),
\]
where, as before, $\vec x^{\tau}_{\vec e}= \tau_{\vec e} \vec x_{\vec e} $.
\end{lemma}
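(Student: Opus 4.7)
The plan is to reduce the bound directly to Lemma~\ref{lem:normbound}, because $|\vec x^{\tau}_{\vec e}(\beta)|$ is already of the exact form that Lemma~\ref{lem:normbound} consumes. Indeed, the signs $\tau$ are invisible to that lemma (its hypothesis involves only moduli), and by construction
\[
|\vec x^{\tau}_{\vec e}(\beta)|^2 \;=\; \frac{x_e(\beta)}{x_e(1)}\,\tan\tfrac{\theta_{\vec e}}{2} \;=\; \frac{\tanh(\beta J_e)}{\tanh(J_e)}\,\tan\tfrac{\theta_{\vec e}}{2}.
\]
Together with the angle identity~\eqref{eq:anglecond}, this places us in the setting of Lemma~\ref{lem:normbound} with scalar
\[
s(\beta)\;:=\;\sup_{e\in E}\frac{\tanh(\beta J_e)}{\tanh(J_e)},
\]
which immediately yields $\|\Lambda_{\mathbb G}(\vec x^{\tau}(\beta))\|\le s(\beta)$ uniformly in $\tau$. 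It thus suffices to establish the inequality $s(\beta)\le 1-c(1-\beta)$.

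For the latter I will exploit the fact that the bounded angle property~\eqref{eq:regular} together with \eqref{eq:cc} forces $\tanh J_e\in[\tan\tfrac{\varepsilon}{2},\,\tan(\tfrac{\pi}{4}-\tfrac{\varepsilon}{2})]$, so $J_e$ lies in a compact interval $[J_{\min},J_{\max}]\subset(0,\infty)$ that depends only on $\varepsilon$. For any such $J$ and any $\beta\in[0,1]$, the monotonicity of $\operatorname{sech}^2$ on $[0,\infty)$ gives
\[
\tanh(J)-\tanh(\beta J)\;=\;\int_{\beta J}^{J}\operatorname{sech}^2(s)\,ds\;\ge\;J(1-\beta)\operatorname{sech}^2(J),
\]
and using $\operatorname{sech}^2(J)/\tanh(J)=2/\sinh(2J)$ this rearranges to
\[
\frac{\tanh(\beta J)}{\tanh(J)}\;\le\;1-(1-\beta)\,\frac{2J}{\sinh(2J)}.
\]
The function $J\mapsto 2J/\sinh(2J)$ is continuous and strictly positive on $(0,\infty)$, hence bounded below by some $c>0$ on $[J_{\min},J_{\max}]$. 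Taking the supremum over $e\in E$ closes the argument.

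The proof is essentially bookkeeping once Lemma~\ref{lem:normbound} and the uniform two-sided control on $J_e$ are in hand; no step feels like a genuine obstacle. The only place where the bounded angle property is really used is in securing the uniform upper bound $J_e\le J_{\max}<\infty$, without which $2J/\sinh(2J)$ could degenerate to $0$ and the linear-in-$(1-\beta)$ gap would fail.
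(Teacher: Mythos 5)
Your proof is correct and follows essentially the same route as the paper's: reduce to Lemma~\ref{lem:normbound} via $s(\beta)=\sup_e \tanh(\beta J_e)/\tanh(J_e)$ (the signs $\tau$ being irrelevant to that lemma), then bound this ratio by $1-c(1-\beta)$ with $c$ coming from a lower bound on $2J/\sinh(2J)$, which is where the uniform upper bound $\sup_e J_e<\infty$ from the bounded angle property enters. The paper phrases the second step as the derivative estimate $\partial_\beta\,\tfrac{\tanh\beta J}{\tanh J}\geq \tfrac{2J}{\sinh 2J}$ rather than your integral mean-value form, but the computation is the same, and your closing remark correctly identifies that only the upper bound on $J_e$ is actually needed.
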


\begin{proof} 
By the bounded angle property, $M:=\sup_{e\in E} J_e <\infty$.
Define $s(\beta,J_e) =\tfrac{x_e(\beta)}{x_e(1)}= \tfrac{\tanh \beta J_e}{\tanh J_e}$. 
For $ \beta \in [0, 1]$, we have
\[
\frac{\partial }{\partial \beta} s(\beta,J_e) = \frac{J_e}{\tanh J_e \cosh^2 \beta J_e}\geq \frac{2J_e}{\sinh 2J_e} \geq \frac{2M}{\sinh 2M}:=c,
\]
and since $s(1,J_e)=1$, we get $s(\beta,J_e)\leq 1-c(1-\beta)$.
The desired inequality follows directly from Lemma~\ref{lem:normbound}.
\end{proof}

Equipped with the results discussed above, we can now prove our theorems.
\begin{proof}[Proof of Theorem~\ref{thm:main}] We first prove part (i). Let $D$ be the maximal degree of~$\mathbb{G}$, and
let $\vec x (\beta)$ and $x(\beta)$ be as in Lemma~\ref{lem:normbound1}. Note that $\vec x (\beta)$ and $x(\beta)$ are related via \eqref{eq:factorization}.
Let $G$ be a finite subgraph of $\mathbb G$, and denote $T= T_G( \vec x^{\tau}(\beta))$ and $\Lambda =\Lambda_G( \vec x^{\tau}(\beta))$.
{\color{black}We use
the upper bound from Lemma~\ref{lem:fermionbound} and Lemma~\ref{lem:normbound1} to get for $u\neq v$,
\begin{align*}
\langle \sigma_u \sigma_v \rangle_{G,\beta}^f &\leq \sum_{\vec e \in \textnormal{In}_u, \vec g \in \textnormal{Out}_v } \vec x_{-\vec e}\vec x_{\vec g}
{ | T^{-1}_{\vec e, \vec g}|}  \\
& \leq D^2 \max_{\vec e \in \textnormal{In}_u , \vec g \in \textnormal{Out}_v} { | T^{-1}_{\vec e, \vec g}|} \\
&= D^2 \max_{\vec e \in \textnormal{In}_u , \vec g \in \textnormal{Out}_v} {\big | \sum_{k=0}^{\infty} [\Lambda^k]_{\vec e, \vec g} \big|} \\
&= D^2 \max_{\vec e \in \textnormal{In}_u , \vec g \in \textnormal{Out}_v} {\big | \sum_{k=d(u,v)-1}^{\infty} [\Lambda^k]_{\vec e, \vec g} \big|} \\
& \leq  D^2 \frac{ \| \Lambda\|^{d(u,v)-1}}{1 - \| \Lambda\| } \\
& \leq  D^2 \frac{ \| \Lambda_{\mathbb G}( \vec x^{\tau}(\beta))\|^{d(u,v)-1}}{1 - \| \Lambda_{\mathbb G}( \vec x^{\tau}(\beta))\| }\\
& \leq \frac{D^2}{1-\epsilon} \epsilon^{d(u,v)-1},
\end{align*}
where $\epsilon =1- c(1-\beta)<1$ with $c$ as in Lemma~\ref{lem:normbound1}. 
The second equality follows from the fact that $\Lambda$ assigns non-zero transition weights only between adjacent edges and therefore one needs at least $k=d(u,v)-1$ steps for $[\Lambda^k]_{\vec e, \vec g}$ to be non-zero.
The second last inequality holds true since $\Lambda $ is a restriction of $\Lambda_{\mathbb G}( \vec x^{\tau}(\beta))$ to the set of directed edges of~$G$, and hence $ \| \Lambda\| \leq \| \Lambda_{\mathbb G}( \vec x^{\tau}(\beta))\| $.
The final bound is independent of $G$, and therefore we complete the proof of part~(i) by taking the limit of $\langle \sigma_u \sigma_v \rangle_{G,\beta}^f$ as $G\nearrow \mathbb G$.
}

{\color{black}To prove part (ii), denote $g(\beta)=\langle \sigma_v \rangle_{G,\beta}^+ $. We use Lemma~\ref{lem:DumCopTas} together with Lemma~\ref{lem:phi}, and the monotonicity of $\varphi_{S,\beta}(v)$ in $\beta$, to get for all $\beta\geq1$,
\begin{align*}
\frac{d}{d \beta}(1-g(\beta)) & \leq- \tfrac 1 {\beta} \inf_{S \ni v} \varphi_{S,\beta}(v)(1-g(\beta)^2 ) \\
& \leq - \tfrac 1 {\beta} \inf_{S \ni v} \varphi_{S,1}(v)(1-g(\beta)^2 ) \\
&\leq - \frac{t_v}{\beta} (1- g(\beta)^2) \\
& \leq -\frac{t_v}{\beta} (1- g(\beta)),
\end{align*}
where ${t_v} =\sqrt{\frac{ r }{R}} \tan \tfrac{\varepsilon}2 $ is as in Lemma~\ref{lem:phi}.
Using Gr\"{o}nwall's lemma we integrate the resulting differential inequality from $\beta_0=1$ to $\beta$ to get $1-g(\beta) \leq (1-g(1))\beta^{-t_v} \leq \beta^{-t_v}$ since $g(1)\geq 0$. This finishes the proof of part (ii)}
\end{proof}
\begin{proof}[Proof of Theorem~\ref{thm:susc}]
The inequality $\chi_{\mathbb{G},\beta}(v)<\infty$ for $\beta<1$ follows from the exponential decay of correlations from Theorem~\ref{thm:main} (i),
and the quadratic growth of balls in the graph distance (due to the fact that the circles have a minimal diameter).

The fact that $\chi_{\mathbb{G},\beta}(v)=\infty$ for $\beta\geq 1$ follows directly from Lemma~\ref{lem:phi} and the second Griffiths inequality which implies that $\varphi_{S,\beta}(v)$ is non-decreasing in~$\beta$.
\end{proof}

\section{Discrete holomorphicity of fermionic observables} \label{sec:holomorphic}
In this section we discuss discrete holomorphicity of the critical fermionic observable given by the inverse Kac--Ward operator. 
We note that, even though the observable in the general setting of circle patterns satisfies less local constraints than its isoradial version, and hence the tools of~\cite{CheSmi12} are not fully applicable, 
our results point in the direction of conformal invariance of the scaling limit. 

{\color{black}We first give a short overview of the properties of the critical isoradial fermionic observable which were crucial in establishing its scaling limit in~\cite{CheSmi12}.
The \emph{spin fermionic observable} $f$ studied in~\cite{CheSmi12} is related to the complex-valued partition functions $F$ from~\eqref{eq:fo}, and is defined on the midpoints of the edges of a finite subgraph $G$ of an isoradial graph $\mathbb{G}$. At criticality, it satisfies a strong form of discrete holomorphicity called s-holomorphicity, which implies that
\begin{enumerate}
\item the discrete contour integrals of $f$ are well defined both on ${G}$ and ${G}^*$, \label{prop:contig}
\item the real part (or the imaginary part, depending on the precise definitions) of the discrete contour integral of $f^2$, denoted by $h$, is well defined both on ${G}$ and ${G}^*$, \label{prop:contig2}
\item moreover $h$ is sub- and super-harmonic on ${G}$ and ${G}^*$ respectively with respect to the critical Laplacian, \label{prop:harm}
\item $f$ is the unique solution to a certain discrete boundary value problem on~$G$. \label{prop:bc}
\end{enumerate}
We note that these properties are only the starting point for the arguments in~\cite{CheSmi12}, and the full analysis of $f$ in the scaling limit requires many additional technical estimates on the relevant observables.

In this section we will prove that in the case of circle patterns, the corresponding critical fermionic observable satisfies property 
\begin{itemize}
\item \eqref{prop:contig} but only on ${G}^*$ (Lemma~\ref{lem:contin}),
\item \eqref{prop:contig2} also only on ${G}^*$ (the equality from Lemma~\ref{lem:contin2}),
\item \eqref{prop:harm} on $G$ in a \emph{weak sense}. To be more precise, the inequality of Lemma~\ref{lem:contin2} is exactly the same inequality that guarantees that 
the function $h$ is subharmonic on~$G$ in the isoradial case. However, for circle patterns, the function $h$ is not well defined on $G$ as an exact discrete contour integral (for a generalized definition, see~\cite{Chelkak}),
\item \eqref{prop:bc} in the same sense as for isoradial graphs (Corollary~\ref{cor:dRHbvp}).
\end{itemize}
This incomplete picture, compared to the isoradial case, suggests that new ideas are required to carry out the program of~\cite{CheSmi12} in the setting of circle pattterns.
}

To state our results, we first need to recall the notion of {s-holomorhicity} introduced by Smirnov~\cite{smirnov} for the square lattice, and generalized by Chelkak and Smirnov~\cite{CheSmi12} to the setting of isoradial graphs. We assume that $G$ is a finite subgraph of $\mathbb G$ such that $\mathbb G^*$ is a circle pattern.
Let $\text{Proj}(z;\ell)$ be the orthogonal projection of the complex number $z$ onto the complex line~$\ell$.
We say that a function $f: E_G\to \mathbb{C}$ is \emph{s-holomorphic} at an interior vertex $v \in V_G$ if for every dual vertex $v^*$ adjacent to $v$,
\begin{align} \label{eq:shol}
\text{Proj}(f(e_1);  (v-v^*)^{-\frac{1}{2}}\mathbb{R}) = \text{Proj}(f(e_2);  (v-v^*)^{-\frac{1}{2}}\mathbb{R}),
\end{align}
where $e_1$, $e_2$ are the two edges incident on both $v$ and $v^*$, and where, as before, the vertices are identified with the complex numbers given by the embedding.

Our first result will relate the notion of s-holomorphicity to the Kac--Ward matrix. 
To this end, let 
\[
\eta_{\vec e}=h_{\vec e} - t_{\vec e},\quad \text{ and } \quad \ell_{\vec e} = \eta_{\vec e}^{-\frac12} \mathbb{R}.
\] 
We define $\mathcal{L}$ to be the real linear space of functions $\varphi: \vec E \to \mathbb{C}$ 
satisfying $\varphi_{\vec e} \in \ell_{\vec e}$ for all $\vec e$. One can easily check that $\mathcal{L}$ is invariant under the action of the Kac--Ward operator.
Let $S$ be the operator mapping complex functions on $E$ to functions in $\mathcal{L}$ given by
\[
Sf(\vec e) = \rho_{\vec e}  \text{Proj}(f(e);\ell_{\vec e}),
\]
where $\rho_{\vec e}=\sqrt{|e^*|}$ is as before.
Note that $\ell_{\vec e}$ and $\ell_{-\vec e}$ are orthogonal, and hence we have for $\varphi \in \mathcal{L}$,
\[
S^{-1}\varphi( e) = \rho^{-1}_{\vec e}\big( \varphi(\vec e) +\varphi(-\vec e) \big). 
\]
Let $\vec x_{\vec e}= \sqrt{\tan \frac{\theta_{\vec e}}2}$ be the critical weights. The following result in the setting of isoradial graphs was first proved in~\cite{Lis2014}.
\begin{proposition}[s-holomorphicity and the Kac--Ward matrix] \label{thm:sholKW}
A function $f$ is s-holomorphic at an interior vertex $v$ of $G$ if and only if 
\[
T_G(\vec x)Sf(\vec e) = 0 \qquad \text{for all } \vec e \in \textnormal{In}_v.
\]
\end{proposition}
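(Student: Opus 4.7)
The plan is to reduce the equivalence to a local computation at $v$: both sides amount to $\deg(v)$ real linear constraints on the $2\deg(v)$ real parameters describing $\{f(e):e\ni v\}$, and I will identify these two systems with each other.

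First, since $\Lambda_G(\vec x)$ preserves the real subspace $\mathcal L$ and $Sf\in\mathcal L$, the equation $T_G(\vec x)Sf(\vec e)=0$ lives a priori in $\ell_{\vec e}$, so it is a single real constraint; varying $\vec e\in\textnormal{In}_v$ thus yields $\deg(v)$ real equations. s-holomorphicity at $v$ also imposes one real equation per corner, and there are exactly $\deg(v)$ corners at $v$. The dimensions match, so it suffices to show that the two codimension-$\deg(v)$ subspaces coincide.

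Next I would choose local coordinates at $v$: place $v$ at the origin, cyclically label the neighbors $v_1,\ldots,v_d$ and the adjacent dual vertices $v^*_1,\ldots,v^*_d$ (with $v^*_k$ between $v_k$ and $v_{k+1}$), and write $f(e_k)=\alpha_k\mu_k+\beta_k(-i\mu_k)$, where $\mu_k$ is a unit vector in $\ell_{\vec e_k}$, so that $\alpha_k,\beta_k\in\mathbb R$ encode the projections of $f(e_k)$ onto $\ell_{\vec e_k}$ and its orthogonal complement $\ell_{-\vec e_k}$. A direct trigonometric calculation, using that the radius $v-v^*_k$ makes angle $\theta_{\vec e_k}/2$ with the perpendicular from $v$ to the chord $e_k^*$, shows that s-holomorphicity at $v^*_k$ takes the form
\[
D_k:=\alpha_k\sin\tfrac{\theta_k}2-\beta_k\cos\tfrac{\theta_k}2+\alpha_{k+1}\sin\tfrac{\theta_{k+1}}2+\beta_{k+1}\cos\tfrac{\theta_{k+1}}2=0.
\]
Expanding $T_G(\vec x)Sf((v_j,v))=0$ by means of the turning-angle identity $e^{i\angle(-\vec e_j,\vec e_k)/2}\mu_k=\epsilon_{jk}(-i)\mu_j$ (with $\epsilon_{jk}\in\{\pm1\}$ determined by the cyclic position of $v_k$ relative to $v_j$) together with $\rho_{\vec e}\vec x_{\vec e}=2\sqrt{r_v}\sin(\theta_{\vec e}/2)$ and $\rho_{\vec e}/(2\vec x_{\vec e})=\cos(\theta_{\vec e}/2)$ (the same trigonometric identities that power the eigenvector lemma) reduces the Kac--Ward equation to
\[
K_j:=\cos\tfrac{\theta_j}2\beta_j-\sum_{k\neq j}\epsilon_{jk}\sin\tfrac{\theta_k}2\alpha_k=0.
\]

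Finally I would verify that $\{D_k=0\}_{k=1}^d$ and $\{K_j=0\}_{j=1}^d$ cut out the same subspace, via the telescoping identity $K_{j+1}-K_j=D_j$ between consecutive Kac--Ward equations and consecutive s-hol discrepancies. The main obstacle will be the cyclic wraparound at $v^*_d$: the perpendicular from $v$ to the closing chord $e_1^*$ lies in direction $(\psi_d+\psi_1)/2+\pi$ rather than $(\psi_d+\psi_1)/2$, because the arc containing $v_1$ is the one traversing angle $0$, and this $\pi$-shift introduces an extra $(-1)^{1/2}$ in the branch of $\mu_1$ that must be reconciled with the signs of the $e^{i\angle/2}$ factors. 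Once this sign bookkeeping is performed consistently---in a manner entirely analogous to the argument behind the eigenvector lemma---the telescope closes into a cyclic relation, the two systems are seen to have identical row span, and the proposition follows.
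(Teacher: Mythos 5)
Your overall strategy is sound and genuinely different from the paper's: the paper simply observes that both conditions are local at $v$, that the local geometry (one circle of radius $r_v$ with chords at half-angles $\theta_{\vec e}$) is indistinguishable from that of an isoradial graph, and that the rows of $T_G(\vec x)S$ indexed by $\textnormal{In}_v$ equal the corresponding rows of the matrix $TS$ of~\cite{Lis2014} up to the positive scalar $2\sqrt{r_v}\,\vec x_{\vec e}^{-1}$, so Theorem~2.1 there applies verbatim. You instead re-derive the equivalence from scratch, and your explicit forms of the two linear systems ($D_k$ for s-holomorphicity at the corner $(v,v_k^*)$, $K_j$ for the Kac--Ward row at $(v_j,v)$, using $\rho_{\vec e}\vec x_{\vec e}=2\sqrt{r_v}\sin\tfrac{\theta_{\vec e}}2$ and $\rho_{\vec e}/\vec x_{\vec e}=2\sqrt{r_v}\cos\tfrac{\theta_{\vec e}}2$) are correct up to basis conventions. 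Two remarks on the framing: the observation that ``the dimensions match'' buys you nothing, since equal numbers of equations do not imply equal solution spaces --- you must (and do intend to) prove equality of row spans; and the whole argument is local, so it would be cleaner to say at the outset that you may assume $G$ is a single star around $v$ with $r_v=1$.

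The genuine gap is in the step you defer: the claimed identity $K_{j+1}-K_j=D_j$ cannot hold for all $j$ cyclically, with any consistent choice of signs $\epsilon_{jk}\in\{\pm1\}$. Matching coefficients of $\alpha_k$ in $K_{j+1}-K_j$ against $D_j$ forces $\epsilon_{jk}=\epsilon_{j+1,k}$ for $k\notin\{j,j+1\}$, $\epsilon_{j,j+1}=+1$ and $\epsilon_{j+1,j}=-1$; chaining the first condition around the cycle gives $\epsilon_{k-1,k}=\epsilon_{k+1,k}=-1$, contradicting $\epsilon_{k-1,k}=+1$. One can also see the obstruction without coordinates: a closed cyclic telescope would yield $\sum_j\pm D_j=0$, but the $D_k$ are linearly independent (a locally s-holomorphic $f$ is freely determined by the $d=\deg(v)$ common projections onto the lines $(v-v_k^*)^{-1/2}\mathbb{R}$, so the solution space has dimension exactly $2d-d$). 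The source of the discrepancy is the monodromy of the square roots: $e^{\frac i2\angle(-\vec e_j,\vec e_k)}$ and the branches of $\eta_{\vec e}^{-1/2}$ accumulate a factor $-1$ around $v$, so at exactly one step the relation must degenerate to something like $K_{j+1}+K_j=\pm D_j$, and the final linear-algebra step (mutual containment of row spans, e.g.\ via $2K_d=\sum_j\pm D_j$) has to be redone with that modified relation. This sign analysis is not bookkeeping to be ``performed consistently'' at the end --- it is the entire content of the proposition, and as stated your plan asserts a relation that is false. Either carry out the branch-tracking in full (essentially reproving Theorem~2.1 of~\cite{Lis2014}), or take the paper's route and reduce to the isoradial statement by the local scaling observation.
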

\begin{proof}
Note that $\rho_{(v,u)} = \sqrt{2r_v \sin \theta_{(v,u)} }$, where $r_v$ is the radius of the circle centered at $v$. This implies that
the row of $T_G(\vec x)S$ indexed by $\vec e\in\text{In}_v$ is equal to the corresponding row of the matrix
$TS$ from~\cite{Lis2014} scaled by $2\sqrt{r_v}\vec x_{\vec e}^{-1}$, and hence the result directly follows from Theorem~2.1 of~\cite{Lis2014}. (Note that the matrix $T$ from~\cite{Lis2014}
is our matrix $T(\vec x)$ conjugated by $\text{Diag}\{\vec x_{\vec e}: \vec e \in \vec E \}$).
\end{proof}

This in particular implies that s-holomorphic functions can be uniquely recovered from their boundary values.
Let $G$ and $\bar G$ be as defined before Lemma~\ref{lem:phi},
and let $\varphi\in \mathcal{L}$ be a function defined on the directed edges of $\bar{G}$ and satisfying
\[ 
\varphi(\vec e) \in l_{\vec e} \quad \text{for } \vec e \in \vec{n}(\partial G), \qquad \text{and} \qquad \varphi(\vec e)= 0 \quad \text{otherwise}. 
\]
Following Smirnov~\cite{smirnov}, we say that $f: {E}_{\bar G} \to \mathbb{C}$ solves the \emph{discrete Riemann--Hilbert boundary value problem}
for the pair $(G,\varphi)$ if $f$ is s-holomorphic at all $v\in V_G$ and 
\[
Sf(\vec e) = \varphi(\vec e) \qquad \text{for all }  \vec e \in \vec{n}(G).
\]

\begin{corollary}[Boundary value problem] \label{cor:dRHbvp}
Let $G$ and $\varphi$ be as above and let $T_{\bar G}(\vec x)$ be the critical Kac--Ward operator defined on $\bar{G}$, where 
the half-edges of $\bar{G}$ inherit weights from the corresponding edges of $\mathbb{G}$.
Then the discrete Riemann--Hilbert boundary value problem for $(G,\varphi)$ has a unique solution
\[
 f = S^{-1}T^{-1}_{\bar G}(\vec x) \varphi.
\]
\end{corollary}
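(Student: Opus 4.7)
The plan is to verify that $f := S^{-1} T^{-1}_{\bar G}(\vec x) \varphi$ is a well-defined solution, and to deduce uniqueness from invertibility of $T_{\bar G}(\vec x)$. The first task is to check that $g := T^{-1}_{\bar G}(\vec x)\varphi$ lies in $\mathcal{L}$, so that $S^{-1}g$ makes sense. Since $\Lambda_{\bar G}(\vec x)$ preserves $\mathcal{L}$ (as noted just before Proposition~\ref{thm:sholKW}), so does $T_{\bar G}(\vec x)$, and by finite-dimensional invertibility so does its inverse; combined with $\varphi \in \mathcal{L}$ this yields $g \in \mathcal{L}$. The orthogonality $\ell_{\vec e} \perp \ell_{-\vec e}$ then forces $S(S^{-1}g) = g$, i.e.\ $Sf = g$.

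With the setup in place, I would verify the two defining properties of a solution. For s-holomorphicity at any $v \in V_G$, observe that an edge $\vec e \in \textnormal{In}_v$ is either a full edge of $G$ or a half-edge directed into $v$ from an adjacent midpoint; in either case $\vec e \notin \vec n(\partial G)$, so $\varphi(\vec e) = 0$. Hence $[T_{\bar G}(\vec x) Sf](\vec e) = [T_{\bar G}(\vec x) g](\vec e) = 0$, and Proposition~\ref{thm:sholKW} delivers s-holomorphicity at $v$. For the boundary condition, the key structural point is that every midpoint $m$ appearing in $\bar G$ has degree exactly one: for an outward half-edge $\vec e = (v,m) \in \vec n(\partial G)$, the unique directed edge with tail $h_{\vec e} = m$ is $-\vec e$, so the sum defining $[\Lambda_{\bar G}(\vec x) g](\vec e)$ is empty and $[T_{\bar G}(\vec x) g](\vec e) = g(\vec e)$. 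Therefore $Sf(\vec e) = g(\vec e) = \varphi(\vec e)$, as required.

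For uniqueness, let $f'$ be any solution and set $g' := Sf'$. Proposition~\ref{thm:sholKW} applied at each $v \in V_G$, together with the degree-one midpoint computation above, gives $T_{\bar G}(\vec x) g' = \varphi$ on every directed edge of $\bar G$. The Kac--Ward identity $\det T_{\bar G}(\vec x) = Z_{\bar G}(x)^2 > 0$ for positive weights shows that $T_{\bar G}(\vec x)$ is invertible, so $g' = g$ and $f' = f$. The argument is essentially bookkeeping; the one genuine observation is that the midpoints introduced in $\bar G$ have degree one, which trivialises the rows of $T_{\bar G}(\vec x)$ indexed by $\vec n(\partial G)$ and thereby pins $Sf$ to $\varphi$ on the boundary.
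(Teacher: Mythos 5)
Your proof is correct and follows essentially the same route as the paper: both hinge on Proposition~\ref{thm:sholKW} for the rows of $T_{\bar G}(\vec x)$ indexed by edges pointing into interior vertices, and on the observation that the midpoints have degree one so the rows indexed by $\vec n(\partial G)$ reduce to the identity, giving $T_{\bar G}(\vec x)Sf=\varphi$ and hence the formula. You are in fact slightly more complete than the paper, which only runs the uniqueness direction explicitly, whereas you also verify that $T^{-1}_{\bar G}(\vec x)\varphi\in\mathcal{L}$ and that the candidate genuinely solves the boundary value problem.
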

\begin{proof}
Suppose that $f$ is a solution to the discrete Riemann--Hilbert boundary value problem. 
Note that $\bar{G}$ is not formally
a subgraph of $\mathbb{G}$ since it contains half-edges. However, these half-edges are parallel to the corresponding edges of $\mathbb{G}$, and therefore we can use Theorem~\ref{thm:sholKW}
to conclude that $T_{\bar G}(\vec x)S f (\vec e) = 0$ for all $\vec e \notin \vec{n}(G)$. 
Moreover if $\vec e \in \vec{n}(G)$, then $h(\vec e)\neq t(\vec g)$ for all $\vec g\neq-\vec e$. Hence by the definition of the Kac--Ward operator, $T_{\bar G}(\vec x)S f(\vec e) = \textnormal{Id} Sf(\vec e)= \varphi(\vec e)$ for $\vec e \in \vec{n}(G)$.
This means that $T_{\bar G}(\vec x)S f(\vec e)= \varphi(\vec e)$ for all directed edges $\vec e$, and the claim follows.
\end{proof}

Finally, we wish to prove that certain discrete contour integrals related to s-holomorphic functions are well defined.
To this end, we first list several basic results.
Fix $f: E \to \mathbb{C}$, and let $\varphi = Sf \in \mathcal{L}$, which means that $f(e)=\rho_{\vec e}^{-1}(\varphi(\vec e)+\varphi(-\vec e))$. 
Define $D_{\vec e} = \eta_{\vec e}/ |\eta_{\vec e} |$. An elementary 
computation yields
\begin{align}
\Re(  \eta_{\vec e^* } f^2(e))& = 2 D_{\vec e^* } \varphi (\vec e) \varphi(-\vec e), \label{eq:Re} \\
 \Im(  \eta_{\vec e^* } f^2(e)) &= |\varphi(\vec e)|^2-|\varphi(-\vec e)|^2 \label{eq:Im}.
\end{align}
Fix a vertex $v\in V$, and let $\varphi_{\text{in}} = (\varphi(-\vec e))_{\vec e\in \text{Out}_v }$ and $\varphi_{\text{out}} = (\varphi(\vec e))_{\vec e\in \text{Out}_v }$.
If $f$ is s-holomorphic, then by Theorem~\ref{thm:sholKW}, $T_G(\vec x)\varphi(\vec e) = 0$ for all $\vec e \in \textnormal{In}_v$. Equivalently,
\begin{align} \label{eq:f2}
\tilde \Lambda \varphi_{\text{out}}  = \varphi_{\text{in}},
\end{align}
where $ \tilde \Lambda= \tilde \Lambda(\vec x)^v$ is the block matrix from Sect.\ \ref{sec:KacWard} acting on the linear subspace indexed by $\text{Out}_v$.
Let $D=\text{Diag}\big\{D_{\vec e}:{\vec e \in \text{Out}_v}\big\}$, and let $\angle(\vec e)= \text{Arg}(D_{\vec e}) \in (-\pi,\pi]$. 
We claim that $D \tilde \Lambda$ is skew-symmetric. Indeed,
for $\vec e, \vec g \in \text{Out}_v$, $\vec e \neq \vec g$, we have
\begin{align*}
\frac{[D\tilde \Lambda]_{\vec e, \vec g}}{[D\tilde \Lambda]_{\vec g, \vec e}} = e^{i(\angle(\vec e)-\angle(\vec g) +\tfrac12\angle(-\vec e, \vec g)  -\tfrac12\angle(-\vec g, \vec e)) }
=e^{i(\angle(\vec e)-\angle(\vec g) +\angle(-\vec e, \vec g))  } 
=  e^{\pm i \pi }
=-1.
\end{align*}

For a directed edge $\vec e$ of $\mathbb G$, let ${\vec e}^*$ be the directed edge of $\mathbb{ G}^*$ crossing $\vec e$ and such that $\angle(\vec e,\vec e^*)>0$.
Recall that $\eta_{\vec e}=h_{\vec e} - t_{\vec e}$.
The next lemma says that the discrete contour integral of an s-holomorphic function is well defined  on $\mathbb G^*$, verifying ``half'' of property~\eqref{prop:contig}.
\begin{lemma}[Vanishing of integrals over closed contours] \label{lem:contin}
If $f$ is s-holomorphic at an interior vertex $v$, then 
\begin{align*}
\sum_{\vec e \in\textnormal{Out}_v} \eta_{\vec e^* } f(e) =0.
\end{align*}
\end{lemma}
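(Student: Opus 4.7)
The plan is to convert the sum into a linear-algebra quantity on $\textnormal{Out}_v$ which is then annihilated by combining two ingredients already present in the paper: the eigenvector relation $\tilde\Lambda(\vec x)^v \rho^v = \rho^v$ from the lemma following Proposition~\ref{thm:eigenvector}, and the skew-symmetry $(D\tilde\Lambda(\vec x)^v)^T = -D\tilde\Lambda(\vec x)^v$ recorded just before the statement.

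The reduction proceeds as follows. Setting $\varphi = Sf$ and using $f(e) = \rho_{\vec e}^{-1}(\varphi(\vec e) + \varphi(-\vec e))$, the key geometric observation is that $e$ and $e^*$ are perpendicular, since $e^*$ is the common chord of the two circles centered at the endpoints of $e$; the convention $\angle(\vec e, \vec e^*) > 0$ then forces $\vec e^*$ to be the $+\pi/2$ rotation of $\vec e$ and hence $D_{\vec e^*} = iD_{\vec e}$. Combined with $|\eta_{\vec e^*}| = |e^*| = \rho_{\vec e}^2$, this turns the sum into
\[
\sum_{\vec e \in \textnormal{Out}_v} \eta_{\vec e^*} f(e) = i (D\rho^v)^T (\varphi_{\textnormal{out}} + \varphi_{\textnormal{in}}),
\]
where I identify $\varphi(-\vec e) = \varphi_{\textnormal{in}}(\vec e)$ for $\vec e \in \textnormal{Out}_v$. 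By Proposition~\ref{thm:sholKW}, s-holomorphicity at $v$ is exactly the relation \eqref{eq:f2}, so $\varphi_{\textnormal{in}} = \tilde\Lambda(\vec x)^v\varphi_{\textnormal{out}}$, and the identity reduces to proving $(D\rho^v)^T(\textnormal{Id} + \tilde\Lambda(\vec x)^v) = 0$.

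The remaining step is to show $(D\rho^v)^T$ is a left eigenvector of $\tilde\Lambda(\vec x)^v$ with eigenvalue $-1$. Rewriting the skew-symmetry as $D\tilde\Lambda = -\tilde\Lambda^T D$ and transposing $\tilde\Lambda \rho^v = \rho^v$ (which, since $\rho^v$ is real, yields $(\rho^v)^T \tilde\Lambda^T = (\rho^v)^T$), one computes $(D\rho^v)^T \tilde\Lambda = (\rho^v)^T D\tilde\Lambda = -(\rho^v)^T \tilde\Lambda^T D = -(D\rho^v)^T$, as required. I do not foresee any real obstacle here: the content of the argument is the clean observation that multiplying the real ``Perron-type'' eigenvector $\rho^v$ by the diagonal of unit-modulus phases $D$ flips its eigenvalue from $+1$ to $-1$, which is precisely what is needed to kill $\textnormal{Id} + \tilde\Lambda$. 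The only point requiring care is the sign of the rotation $\vec e \to \vec e^*$; the opposite convention would lead instead to the false identity $(D\rho^v)^T(\textnormal{Id} - \tilde\Lambda) = 0$.
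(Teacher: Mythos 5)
Your proof is correct and follows essentially the same route as the paper's: both reduce the sum to $i\,\rho^T D(\varphi_{\textnormal{out}}+\varphi_{\textnormal{in}})$ and kill it using the skew-symmetry $D\tilde\Lambda=-\tilde\Lambda^T D$ together with $\tilde\Lambda\rho^v=\rho^v$. Your explicit justification of $D_{\vec e^*}=iD_{\vec e}$ (left implicit in the paper) and the rephrasing of the final computation as a left eigenvalue $-1$ statement are only cosmetic differences.
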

\begin{proof}
Let $f= (f(e))_{\vec e \in \text{Out}_v}$. Recall that $\rho=(\sqrt{|e^*|})_{\vec e \in \text{Out}_v}$ is an eigenvector of
$\tilde \Lambda$ of eigenvalue $1$, and let $\rho^2=({|e^*|})_{\vec e \in \text{Out}_v}$. 
The desired equality can be now rewritten as $i (\rho^2)^T Df=0$.
Since $D\tilde \Lambda$ is skew symmetric, we have $D\tilde \Lambda = - \tilde \Lambda^T D$. 
Hence, by \eqref{eq:f2} we get
\begin{align*}
(\rho^2)^T Df &= \rho^TD (\varphi_{\text{out}}+\varphi_{\text{in}}) \\
 &=  \rho^T  D\varphi_{\text{out}} + \rho^T  D \tilde \Lambda \varphi_{\text{out}} \\
 &=\rho^T  D \varphi_{\text{out}}   -\rho^T \tilde  \Lambda^T D  \varphi_{\text{out}}  \\
 &= \rho^T  D \varphi_{\text{out}}  -( \tilde \Lambda \rho)^T  D  \varphi_{\text{out}} \\
 &= 0,
\end{align*}
which completes the proof.
\end{proof}

We finish with a result saying that if $f$ is s-holomorphic, then the real part of the discrete contour integral of $f^2$ is well defined on $\mathbb{G}^*$, and moreover, the imaginary part of the integral of $f^2$ over any closed counterclockwise 
contour on $\mathbb{G}^*$ is nonnegative. This is the counterpart in the setting of circle patterns of the fundamental properties~\eqref{prop:contig2} and \eqref{prop:harm} discovered by Smirnov~\cite{smirnov}, and Chelkak and Smirnov~\cite{CheSmi12}.

\begin{lemma} \label{lem:contin2}
If $f$ is s-holomorphic at an interior vertex $v$, then 
\begin{align*}
\Re\Big(\sum_{\vec e \in\textnormal{Out}_v} \eta_{\vec e^* } f^2(e)\Big) =0, \quad \text{and} \quad  \Im \Big(\sum_{\vec e \in\textnormal{Out}_v} \eta_{\vec e^* } f^2(e)\Big) \geq 0.
\end{align*}
\end{lemma}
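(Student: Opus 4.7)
The plan is to use the identities \eqref{eq:Re}--\eqref{eq:Im} to reduce each part of the lemma to a statement about the block $\tilde\Lambda(\vec x)^v$ at the critical weights $\vec x_{\vec e}=\sqrt{\tan(\theta_{\vec e}/2)}$. The key geometric input is that in a circle pattern the primal edge $\vec e$ and the dual edge $\vec e^*$ are perpendicular: the line joining the two circle centers is perpendicular to their common chord. Combined with the orientation convention $\angle(\vec e,\vec e^*)>0$, this forces $\angle(\vec e,\vec e^*)=\pi/2$, equivalently
\[
D_{\vec e^*}=iD_{\vec e}, \qquad \eta_{\vec e^*}=i\rho_{\vec e}^2 D_{\vec e}.
\]

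For the real part, the plan is to sum \eqref{eq:Re} over $\vec e\in\textnormal{Out}_v$ and use the identity above to rewrite
\[
\Re\Big(\sum_{\vec e\in\textnormal{Out}_v}\eta_{\vec e^*}f^2(e)\Big)=2i\sum_{\vec e\in\textnormal{Out}_v}D_{\vec e}\varphi(\vec e)\varphi(-\vec e)=2i\,\varphi_{\text{out}}^T D\,\varphi_{\text{in}}.
\]
Since $f$ is s-holomorphic at $v$, Proposition~\ref{thm:sholKW} gives relation \eqref{eq:f2}, namely $\varphi_{\text{in}}=\tilde\Lambda(\vec x)^v\varphi_{\text{out}}$, so the expression further becomes $2i\,\varphi_{\text{out}}^T D\tilde\Lambda(\vec x)^v\varphi_{\text{out}}$. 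The skew-symmetry of $D\tilde\Lambda(\vec x)^v$ established in the paragraph preceding Lemma~\ref{lem:contin} makes the quadratic form $x^T D\tilde\Lambda(\vec x)^v x$ vanish identically on $\mathbb{C}^{\textnormal{Out}_v}$, yielding the first equality.

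For the imaginary part, summing \eqref{eq:Im} telescopes into
\[
\Im\Big(\sum_{\vec e\in\textnormal{Out}_v}\eta_{\vec e^*}f^2(e)\Big)=\|\varphi_{\text{out}}\|^2-\|\varphi_{\text{in}}\|^2=\|\varphi_{\text{out}}\|^2-\|\tilde\Lambda(\vec x)^v\varphi_{\text{out}}\|^2,
\]
which is non-negative provided $\|\tilde\Lambda(\vec x)^v\|\le 1$ at the critical weights. This operator-norm bound will follow from Lemma~\ref{lem:normbound} with $s=1$: equation \eqref{eq:lisnorm} reads $\sum_{\vec e\in\textnormal{Out}_v}\arctan(\tan(\theta_{\vec e}/2))=\sum_{\vec e\in\textnormal{Out}_v}\theta_{\vec e}/2=\pi/2$, which is exactly the angle condition \eqref{eq:anglecond}.

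I do not expect any serious obstacle: both parts reduce to structural facts about $\tilde\Lambda(\vec x)^v$ already contained in the paper (the skew-symmetry of $D\tilde\Lambda(\vec x)^v$ and the sharp operator-norm bound at criticality). The only genuinely new ingredient is the perpendicularity relation $D_{\vec e^*}=iD_{\vec e}$, and this is precisely where the geometric hypothesis defining a circle pattern---each face being inscribed in a circle centered at a primal vertex---enters. It is worth noting that this is also why the analogous identities on the primal graph $\mathbb{G}$, rather than $\mathbb{G}^*$, are not available in general, in line with the ``half'' discussed in the introduction to this section.
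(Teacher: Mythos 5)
Your proposal is correct and follows essentially the same route as the paper: both parts reduce via \eqref{eq:Re}--\eqref{eq:Im} and \eqref{eq:f2} to the skew-symmetry of $D\tilde\Lambda(\vec x)^v$ (real part) and to the bound $\|\tilde\Lambda(\vec x)^v\|\le 1$ from \eqref{eq:lisnorm} at the critical weights (imaginary part). Your explicit remark that the perpendicularity $D_{\vec e^*}=iD_{\vec e}$ is what makes \eqref{eq:Re}--\eqref{eq:Im} take this form is a useful clarification of a step the paper leaves implicit, but it is not a different argument.
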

\begin{proof}

By \eqref{eq:Re} and \eqref{eq:f2}, we have
\[
\Re \Big(\sum_{\vec e \in\textnormal{Out}_v} \eta_{\vec e^* } f^2(e)\Big) =2i\varphi_{\text{out}}^TD\varphi_{\text{in}}= 2i \varphi_{\text{out}}^TD\tilde \Lambda\varphi_{\text{out}}= 0,
\]
where in the last equality we used that $D\tilde\Lambda$ is skew-symmetric.
 
Moreover, by \eqref{eq:Im}, \eqref{eq:f2} and the fact that the operator norm of $\tilde \Lambda$ is bounded by $1$ \eqref{eq:lisnorm}, we get
\[
\Im \Big(\sum_{\vec e \in\textnormal{Out}_v} \eta_{\vec e^* } f^2(e)\Big) =\|\varphi_{\text{out}}\|^2- \|\varphi_{\text{in}}\|^2 =\|\varphi_{\text{out}}\|^2 -\|\tilde \Lambda \varphi_{\text{out}}\|^2 \geq 0,
\]
which completes the proof.
\end{proof}

\begin{remark}
Note that the last two lemmas follow directly from the corresponding results of Chelkak and Smirnov for isoradial graphs~\cite{CheSmi12}.
Indeed, both the definition of s-holomorphicity \eqref{eq:shol} and the desired relations depend only on the geometry of the graph $\mathbb{G}^*$ in the immediate neighborhood of~$v$, 
which is indistinguishable from the one of an isoradial graph.
However, we included the concise proofs that use the Kac--Ward matrix as they shed a different light on these relations.  
\end{remark}

{\color{black}\section{Applications to other graphs} \label{sec:other}
In this section we briefly discuss the consequences of a non-trivial kernel of the Kac--Ward matrix for the question of criticality of the Ising model defined on other types of planar graphs.
\subsection*{Biperiodic graphs}
Cimasoni and Duminil-Copin~\cite{CimDum} (see also~\cite{Li2012}) computed the critical temperature of the the Ising model on an arbitrary biperiodic graph, i.e., a graph which is invariant under 
the action of a $\mathbb{Z}^2$-isomorphic group of translations of the plane. The proof uses the Kac--Ward matrix, and the critical point is identified with the only inverse-temperature $\beta_c$ at which 
there exists a non-trivial periodic vector in the kernel of this matrix. 
Moreover it was shown that at $\beta_c$ the Kac--Ward matrix associated with the \emph{dual} Ising model also has a non-trivial kernel.

Using this one can alternatively obtain criticality of $\beta_c$ without going through the analysis of differentiability of the free energy, as it is done in~\cite{CimDum}.
Indeed, the existence at~$\beta_c$ of the null vectors of the primal and dual Kac--Ward matrix implies a direct analog of Lemma~\ref{lem:phi} which gives a non-zero lower bound on the quantity~$\varphi$ from \eqref{eq:phi} for both the primal and dual model.
Hence, together with the differential inequality from Lemma~\ref{lem:DumCopTas}, exactly as in the proof of part (ii) of Theorem~\ref{thm:main}, we get that the magnetization in the primal Ising model is positive for $\beta >\beta_c$, 
and in the dual model it is positive for $\beta<\beta_c$.
To get a complete picture, one needs to show that the magnetization is never simultaneously positive in both the primal and dual model. This is e.g.\ proved in Theorem~4.4 of~\cite{CimDum}, which 
also gives exponential decay of the two-point spin correlation functions under the assumption that the magnetization vanishes.

\subsection*{S-embeddings} In~\cite{Chelkak}, Chelkak introduced a family of Ising models defined on s-embeddings, which are more general that the critical models on circle patterns. 
S-embeddings are planar graphs (or more precisely pairs of primal and dual graphs) defined by the property that each quad whose diagonals are given by a pair of a primal and dual edge is tangential. This is clearly satisfied for circle patterns since all such quads are in this case kites.
A natural question is if the Ising model of Chelkak is critical in the sense of spin correlations and magnetization.
The answer in the biperiodic case is affirmative and is given by the main result of~\cite{CimDum}, 
and the case of circle patterns with uniformly bounded angles and faces is covered in the present article. 

In the general setting a partial answer, though strongly supporting criticality, can be provided using the same idea as above, i.e. by constructing a null-vector both for the primal and dual Kac--Ward matrix.

\begin{remark}
The fact that the Kac--Ward matrix on s-embeddings with weights as in~\cite{Chelkak} has a non-trivial kernel was first observed by Dmitry Chelkak and the author is grateful to him for sharing his insight.
\end{remark}
We now give the details of the construction, following the setup of~\cite{Chelkak}.
Let $\mathbb{G}$ and $\mathbb{G}^*$ form an s-embedding. A \emph{corner} $c=\{u,v\}$ of $\mathbb G$ is a pair composed of a vertex~$v$ and a face $u$ incident on $v$. 
We define $\Upsilon$ to be the 4-regular graph whose vertex set is the set of all corners of $\mathbb G$, and where two corners $\{u_1,v_1\}$, $\{u_2,v_2\}$ are adjacent if either $u_1=u_2$ and $\{v_1,v_2\}$ is a primal edge, or $v_1=v_2$ and $\{u_1,u_2\}$ is a dual edge (see e.g.\ Fig.\ 26 in~\cite{Mercat}).
We denote by $\Upsilon^{\times}$ the double cover of $\Upsilon$ that branches around each of its faces (see Fig.\ 27 in~\cite{Mercat}).

A \emph{spinor} is a function defined on the vertices of $\Upsilon^{\times}$ whose values at two corners corresponding to one corner in $\Upsilon$ differ by the sign.
Chelkak defined a spinor $\mathcal{F}(c) := (v-u)^{ 1 /2}$ (Remark 6.2 of~\cite{Chelkak}), where $c$ is one of the two versions of the corner $\{u,v\}$, and where the face $u$ is identified with the complex number given by the embedding of $\mathbb{G}^*$.
It turns out that $\mathcal{F}$ satisfies the \emph{three-term spinor relation} (equation (2.13) of~\cite{Chelkak}). To state it, consider a face of $\Upsilon$ corresponding to an edge $e$ of $\mathbb{G}$, and let $c_1, c_2, c_3$ be any three consecutive corners in $\Upsilon^{\times}$ as one goes around the face counterclockwise. Let $x_e =\tan \tfrac \theta 2$ be the 
weight associated to $e$ in the Ising model.
The three-term relation reads 
\begin{align} \label{eq:threeterm}
\mathcal{F}(c_2) = p_e \mathcal{F}(c_1)  + \sqrt{1-p_e^2}\mathcal{F}(c_3),
\end{align}
where $p_e=\sin \theta$ if the corner $c_2=\{u,v\}$ corresponds to the directed edge $(v,u)$ in the tangential quad assigned to $e$ when going counterclockwise around $e$, 
and $p_e=\cos \theta$ if it corresponds to $(u,v)$.

The crucial observation now is that the kernels of the three-term relation and the Kac--Ward matrix are related to each other as was shown in~\cite{CCK}. To be more precise, 
note that $|\mathcal{F}(c)|=|v-u|^{1/2} $ is a well defined function on the corners in $\Upsilon$ that satisfies the three term relation~\eqref{eq:threeterm} with coefficients multiplied by $\bar \eta(c):= \mathcal{F}(c)/ {|\mathcal{F}(c)|}$. This means that $|\mathcal{F}|$ is in the kernel of the (infinite) matrix $D$ defined before Lemma~3.4 in~\cite{CCK}.
We easily obtain from this lemma that $2D=C[I-\tfrac12(Y+iI)D]$, where~$I$ is the identity, $Y$ is the involutive automorphism induced by $\vec e\mapsto -\vec e$, 
and where $C$ is an explicit conjugate of the (infinite) Kac--Ward matrix defined after equation (3.1) in~\cite{CCK}.
Hence, $|\mathcal{F}|$ is the kernel of $C$, and therefore $\rho:=B^{*}|\mathcal{F}| $ is in the kernel of the Kac--Ward matrix as defined in~\cite{CCK}, where $B$ is the (infinite) block-diagonal matrix defined in equation (3.1) of~\cite{CCK}. 
By the same construction applied to $\mathbb{G}^*$ we also obtain a null-vector $\rho^*$ of the dual Kac--Ward matrix. 

\begin{remark}
The Kac--Ward matrix in~\cite{CCK} is the same as ours, up to conjugation by a diagonal matrix with diagonal terms given by $\big(\tfrac{\vec x_{\vec e}}{\vec x_{-\vec e}}\big)^{1/2}$.
We note that can use the construction above to obtain in an alternative way the null-vector for circle-patterns from Proposition~\ref{thm:eigenvector}.
\end{remark}

It is easily seen from the definitions of $B$ and $|\mathcal{F}|$ that if the half-angles of the tangential quads associated to $\mathbb{G}$ and $\mathbb{G}^*$ satisfy condition~\eqref{eq:anglecond}, and moreover the sizes of the quads are uniformly bounded from above, then $\|\rho\|_{\infty}<\infty$ and $\|\rho^*\|_{\infty}<\infty$. By the same arguments as in the biperiodic case, this means that for $\beta>1$, there is positive magnetization in the primal Ising model of Chelkak, and for $\beta<1$ there is positive magnetization in the associated dual model. This is a strong indication that the model at $\beta=1$ is indeed critical.
 
However, it is not clear how to show that the magnetization cannot be simultaneously positive in both the primal and dual Ising model in a general non-periodic s-embedding.
Indeed, the arguments of Theorem~4.4 of~\cite{CimDum} and the operator norm estimates from the present article do not apply in this more general setting.
We leave this as an open problem.
}
\appendix
\include{appendix_name}
\section{} \label{app:a}
The following proof of Lemma~\ref{lem:DumCopTas} is due to Duminil-Copin and Tassion, and (up to small modifications) 
is contained in an unpublished version of~\cite{DumCopTas} (proof of Lemma 2.4 in~\cite{DumCopTasU}). We include it here for completeness.
The graphs considered in this section are not assumed to be planar.

Let $G=(V_G,E_G)$ be a finite graph, and let $\partial V_G \subset V_G$ be a fixed set of vertices called the \emph{boundary}. 
We consider an augmented graph $G_{\bullet}=(V_{\bullet},E_{\bullet})$ where a \emph{ghost} (or a \emph{boundary}) vertex $\bullet$ is added to the vertex set,
and edges of the form $\{\bullet , v\}$, $v\in \partial V_G$, are added to the edge set.

We will consider Ising models on subgraphs of $G_{\bullet}$ as defined in Sect.~\ref{sec:main} with arbitrary positive coupling constants $(J_e)_{e \in  E_{\bullet}}$
(to recover the exact setting of Sect.~\ref{sec:main}, one needs to set $J_{v,\bullet}=\sum_{u \notin V_G } J_{\{u,v\}}$ for $v\in\partial V_G$).
One can easily check that the following relation between boundary conditions holds true for any $v\in V_G$,
\[
\langle \sigma_v \rangle_{G,\beta}^+ = \langle \sigma_v \sigma_{\bullet} \rangle_{G_{\bullet},\beta}^f.
\]
For $S\subset  V_{\bullet}$, we will write $\langle \cdot \rangle_{S,\beta}^f$ for the expectation of the Ising model with free boundary conditions
defined on the subgraph of $G_{\bullet}$ induced by $S$.

The notion of a {current} will play a crucial role in the proof. A \emph{current} is a function $\n: E_{\bullet} \to \mathbb{N}$, whose \emph{weight} is given by 
\[
w(\n) = \prod_{e\in E_{\bullet}} \frac{(\beta J_{e})^{\n_e}}{ (\n_{e})!}.
\]
A vertex $v$ is called a \emph{source} of a current $\n$ if $\sum_{u\sim v} \n_{\{v,u\}}$ is odd.
We denote by $\partial \n$ the set of all sources of $\n$. 
The classical \emph{random current representation} of correlation functions due to Griffiths, Hurst and Sherman~\cite{GHS} yields for any $A\subset S \subset V_{\bullet}$,
\[ \displaystyle
 \big \langle \prod_{v\in A} \sigma_v \big\rangle_{S,\beta}^f = \frac{\sum_{\partial \n = A } w(\n)}{\sum_{\partial \n = \emptyset } w(\n)},
\]
where both sums are over currents that are zero outside the subgraph induced by $S$.
For $u,v \in V_{\bullet}$ and a current $\n$, we will write $u\con[\n] v$ if $u$ is connected to~$v$ via a path of edges in $E_{\bullet}$ with nonzero values of $\n$,
and $u\ncon[\n] v$ if it is not connected.
The last tool that we will need is the celebrated \emph{switching lemma} introduced in~\cite{GHS} and developed by Aizenman in~\cite{aizenman}.
It says that if $A \cup\{ u,v\}\subset S \subset V_{\bullet}$, then
  \begin{equation*}
    \sum_{\substack{\partial \n_1= A \Delta \{u,v\}\\ \partial \n_2=
        \{u,v\}}}w(\n_1) w(\n_2)=\sum_{\substack{\partial \n_1=
        A\\ \partial \n_2=\emptyset}}w(\n_1) w(\n_2) \mathbf{1}[u\con[\n] v],
  \end{equation*}
where $\n=\n_1+\n_2$, $\Delta$ is the symmetric difference, and where again the sums are taken over currents $\n_1$, $\n_2$ that are zero outside the subgraph induced by $S$.

\begin{proof}[Proof of Lemma~\ref{lem:DumCopTas}] Fix a vertex $0\in V$.
From the definition of the Ising model, we have
\begin{equation*}
  \frac{d}{d\beta}\langle
  \sigma_0\rangle_{G,\beta}^+=\frac{d}{d\beta}\langle
  \sigma_0\sigma_{\bullet}\rangle_{G_{\bullet},\beta}^f=\hspace{-0.3cm}\sum_{\{u,v\} \in E_{\bullet}}\hspace{-0.3cm}J_{\{u,v\}}\big(\langle
  \sigma_0\sigma_{ \bullet}\sigma_u\sigma_v \rangle_{G_{\bullet},\beta}^f -\langle
  \sigma_0 \sigma_{ \bullet} \rangle_{G_{\bullet},\beta}^f \langle
  \sigma_u\sigma_v \rangle_{G_{\bullet},\beta}^f\big).
\end{equation*}

Let $Z_{G_{\bullet},\beta}=\sum_{ \partial \n =\emptyset} w(\n)$ be the partition function of currents on $G_{\bullet}$ with no sources.
Using the random current representation of correlation functions and the switching
lemma for $S=V_{\bullet}$, we obtain
\begin{equation*}
 \langle \sigma_0\sigma_{ \bullet}\sigma_u\sigma_v \rangle_{G_{\bullet},\beta}^f -\langle
  \sigma_0 \sigma_{ \bullet}\rangle_{G_{\bullet},\beta}^f \langle
  \sigma_u\sigma_v \rangle_{G_{\bullet},\beta}^f  \\
  = \frac{1}{Z_{G_{\bullet},\beta}^2}\sum_{\substack{\partial
      \n_1=\{0, \bullet\}\Delta\{u,v\} \\ \partial\n_2=\emptyset}}\hspace{-0.3cm}w(\n_1)w(\n_2)\mathbf
  1[0\ncon[\n] \bullet],
\end{equation*}
where $\n=\n_1+\n_2$.
Note that if $\n_1=\{0, \bullet\}\Delta\{u,v\}$, $\partial\n_2=\emptyset$ and
      $0\ncon[\n] \bullet$, then either $0\con[\n]u$ and $v\con[\n] \bullet$, or $0\con[\n]v$ and
        $u\con[\n] \bullet$.
Since the second case is the same as the first one with $u$ and
$v$ exchanged, we get 
\begin{equation}
  \label{eq:12}
   \frac{d}{d\beta}\langle
  \sigma_0\rangle_{ G,\beta}^+=\frac{1}{Z_{G_{\bullet},\beta}^2}
  \sum_{u\in V_{\bullet}}\sum_{v\in V_{\bullet}} J_{\{u,v\}}\delta_{u,v},  
\end{equation}
where 
\[
\delta_{u,v}=\displaystyle\sum_{\substack{\partial
      \n_1=\{0, \bullet\}\Delta\{u,v\}\\ \partial\n_2=\emptyset}}w(\n_1)w(\n_2)\mathbf
  1[0\con[\n]u, v\con[\n] \bullet, 0\ncon[\n] \bullet].
  \]

For $z\in \{0, \bullet\}$, 
define $\mathcal{S}_z$ to be the set of vertices in $V_{\bullet}$ that are \emph{not connected} to $z$ in $\n$.
Let us compute $\delta_{u,v}$ by summing over all possible $\mathcal{S}_0$:
\begin{align*}
  \delta_{u,v}&=\sum_{\substack{S\subset  V_{\bullet}}}\ \sum_{\substack{\partial
      \n_1=\{0, \bullet\}\Delta\{u,v\}\\ \partial\n_2=\emptyset}}w(\n_1)w(\n_2)\mathbf
  1[\mathcal{S}_0=S, 0\con[\n]u, v\con[\n] \bullet, 0\ncon[\n] \bullet]\\
  &=\sum_{\substack{S\subset  V_{\bullet}\\ \text{s.t. }v, \bullet \in
      S\\
      \text{and }0,u\notin S}}\ \sum_{\substack{\partial
      \n_1=\{0, \bullet\}\Delta\{u,v\}\\ \partial\n_2=\emptyset}}w(\n_1)w(\n_2)\mathbf
  1[\mathcal{S}_0=S, v\con[\n] \bullet].
\end{align*}
Note that when $\mathcal{S}_0=S$, $\n_1$ and $\n_2$ 
vanish on every $\{x,y\}$ with $x\in S$ and $y\notin S$. Thus, for $i=1,2$, we can decompose $\n_i$ as $\n_i=\n_i^S+\n_i^{ V_{\bullet}\setminus S}$, where $\n_i^A$
denotes the current with zero values outside the subgraph induced by $A$, and with sources $\partial\n_i^A=A\cap \partial
\n_i$. Together with the last equality and the random current representation of correlation functions, this gives
\begin{equation*}
 \delta_{u,v}=\sum_{\substack{S\subset  V_{\bullet}\\ \text{s.t. }v, \bullet \in
      S\\
      \text{and }0,u\notin S}}\ \sum_{\substack{\partial
      \n_1=\{0\}\Delta\{u\}\\ \partial\n_2=\emptyset}}w(\n_1)w(\n_2)\langle\sigma_v \sigma_{\bullet}\rangle_{S,\beta}^f\mathbf
  1[\mathcal{S}_0=S].
\end{equation*}
Since $1\geq \langle\sigma_v\sigma_{\bullet}\rangle_{S,\beta}^f$, we get
\begin{align*}
 \delta_{u,v} &\ge \sum_{\substack{S\subset  V_{\bullet}\\ \text{s.t. }v, \bullet \in
      S\\
      \text{and }0,u\notin S}}\ \sum_{\substack{\partial
      \n_1=\{0\}\Delta\{u\}\\ \partial\n_2=\emptyset}}w(\n_1)w(\n_2)(\langle\sigma_v\sigma_{\bullet}\rangle_{S,\beta}^f)^2\mathbf
  1[\mathcal{S}_0=S]\\
&= \sum_{\substack{S\subset  V_{\bullet}\\ \text{s.t. }v, \bullet \in
      S\\
      \text{and }0,u\notin S}}\ \sum_{\substack{\partial
      \n_1=\{0\}\Delta\{u\}\Delta\{v, \bullet\}\\ \partial\n_2=\{v, \bullet\}}}w(\n_1)w(\n_2)\mathbf
  1[\mathcal{S}_0=S]\\
&=\sum_{\substack{S\subset  V_{ \bullet}\\ \text{s.t. }v, \bullet \in
      S\\
      \text{and }0,u\notin S}}\ \sum_{\substack{\partial
      \n_1=\{0\}\Delta\{u\}\\ \partial\n_2=\emptyset}}w(\n_1)w(\n_2)\mathbf
  1[\mathcal{S}_0=S, v\con[\n] \bullet]\\
&=\sum_{\substack{\partial
      \n_1=\{0\}\Delta\{u\}\\ \partial\n_2=\emptyset}}w(\n_1)w(\n_2)\mathbf
  1[v\con[\n] \bullet,0\ncon[\n] \bullet],
\end{align*}
where in the second line we used the random current representation of correlation functions, and in the third line again the switching lemma. 
We now sum over all possible $\mathcal{S}_{ \bullet}$:
\begin{align*}\delta_{u,v}&\ge \sum_{S\subset V_G}\ \sum_{\substack{\partial
      \n_1=\{0\}\Delta\{u\}\\ \partial\n_2=\emptyset}}w(\n_1)w(\n_2)\mathbf
  1[\mathcal{S}_{ \bullet}=S,v\con[\n] \bullet, 0\ncon[\n] \bullet]\\
  &= \sum_{\substack{S\subset V_G\\
    \text{s.t. }0,u\in S\\\text{and }v \in V_{\bullet} \setminus S}}\ \sum_{\substack{\partial
      \n_1=\{0\}\Delta\{u\}\\ \partial\n_2=\emptyset}}w(\n_1)w(\n_2)\mathbf
  1[\mathcal{S}_{ \bullet}=S]\\
  &= \sum_{\substack{S\subset V_G\\
    \text{s.t. }0,u\in S\\ \text{and }v \in V_{\bullet} \setminus S}}\ \sum_{\substack{
      \partial \n_1=\emptyset\\ \partial\n_2=\emptyset}}w(\n_1)w(\n_2)\langle\sigma_0\sigma_u\rangle_{S,\beta}^{\rm f}\mathbf  1[\mathcal{S}_{ \bullet}=S].
\end{align*}
The third line follows from the fact that since $\mathcal{S}_{ \bullet}=S$, $\n_1$ and $\n_2$ can be decomposed
as $\n_i=\n_i^S+\n_i^{ V_{\bullet}\setminus S}$ as before. 

Combining
the inequality above with \eqref{eq:12}, we get
\begin{align*}
    \frac{d}{d\beta}\langle
  \sigma_0\rangle_{ G,\beta}^+ &\geq\frac1{Z_{G_{\bullet},\beta}^2}
 \sum_{\substack{S\subset V_G\\ S\ni 0}} \sum_{u\in S}\sum_{v \in V_{\bullet} \setminus S} \sum_{\substack{
      \partial \n_1=\emptyset\\ \partial\n_2=\emptyset}}w(\n_1)w(\n_2)J_{u,v}\langle\sigma_0\sigma_u\rangle_{S,\beta}^{\rm f}\mathbf  1[\mathcal{S}_{\bullet}=S]\\
        &\ge\frac1\beta \sum_{\substack{S\subset V
\\ S\ni 0}}\varphi_{S,\beta} (0)  \frac1{Z_{G_{\bullet},\beta}^2} \sum_{\substack{\partial
      \n_1=\emptyset\\ \partial\n_2=\emptyset}}w(\n_1)w(\n_2)\mathbf
  1[\mathcal{S}_{ \bullet}=S]\label{eq:13}\\
  &\ge \frac1{\beta} \inf_{S\ni 0}\varphi_{S,\beta} (0) \frac1{Z_{G_{\bullet},\beta}^2}{\displaystyle \sum_{\substack{
      \partial \n_1=\emptyset\\ \partial\n_2=\emptyset}}w(\n_1)w(\n_2)\big(1-\mathbf
  1[0\con[\n] \bullet]\big)}\nonumber\\
            &=\frac1\beta \inf_{S\ni 0}\varphi_{S,\beta} (0)(1-(\langle
  \sigma_0\sigma_{\bullet}\rangle_{ G_{\bullet},\beta}^f)^2)\nonumber\\
      &=\frac1\beta \inf_{S\ni 0}\varphi_{S,\beta} (0)(1-(\langle
  \sigma_0\rangle_{ G,\beta}^+)^2),\nonumber
\end{align*}
where in the second inequality we used that $\beta J_{\{u,v \}} \geq \tanh (\beta J_{\{ u,v\}})$, and in the first equality we used the random current representation of correlations and the switching lemma for the last time.
\end{proof}

\bibliographystyle{amsplain}
\bibliography{circlepattern}
\end{document}